\definecolor{GrayBgColor}{rgb}{0.9, 0.9, 0.9}
\definecolor{GrayFgColor}{rgb}{0.4, 0.4, 0.4}
\definecolor{StringColor}{rgb}{0.0, 0.38039, 0.141176}
\scriptsize\linespread{0.9}\footnotesize\fi\ttfamily,
\tiny\color{GrayFgColor}\sffamily\raisebox{0.6pt},
\newcommand\gokw[1]{{\bf\ttfamily{}#1}}
\newcommand\go[1]{\lstinline[language=Golang]{#1}}
\lstdefinelanguage{myhaskell}%
  {morekeywords={if,then,else,case,class,data,default,deriving,%
      hiding,if,in,infix,infixl,infixr,import,instance,let,module,%
      newtype,of,qualified,type,where,do,forall},%
   sensitive,%
   morecomment=[l]--,%
   morecomment=[n]{\{-}{-\}},%
   morestring=[b]",%
   keywordstyle=\bf\ttfamily,%
   literate={REC}{{$\texttt{RE}_{\texttt{C}}$}}3 {RC}{{$\texttt{R}_{\texttt{C}}$}}2 {cast_RE}{{$\texttt{cast}_{\texttt{RE}}$}}5 {ListC}{{$\texttt{List}_{\texttt{C}}$}}5 {FunC}{{$\texttt{Fun}_{\texttt{C}}$}}4%
  }[keywords,comments,strings]%
\renewcommand\emph[1]{{\it #1}}
\def\CC{{C\nolinebreak[4]\hspace{-.05em}\raisebox{.4ex}{\tiny\bf ++}}}
\newcommand\mcolorbox[2]{\colorbox{#1}{\ensuremath{#2}}}
\newcommand{\mathem}{\sf}
\newcommand{\kw}[1]{\normalfont\bfseries\sffamily #1}
\newcommand{\IN}{\mbox{\kw{in}}}
\newcommand{\LET}{\mbox{\kw{let}}}
\newcommand{\CASE}{\mbox{\kw{case}}}
\newcommand{\OF}{\mbox{\kw{of}}}
\newcommand{\foreach}[2]{\overline{#1}^{#2}} 
\newcommand{\foreachN}[1]{\overline{#1}}    
\newcommand{\gap}{\,\,\,\,}
\newcommand{\TYPE}{\mbox{\kw{type}}}
\newcommand{\STRUCT}{\mbox{\kw{struct}}}
\newcommand{\INTERFACE}{\mbox{\kw{interface}}}
\newcommand{\FUNC}{\mbox{\kw{func}}}
\newcommand{\RETURN}{\mbox{\kw{return}}}
\newcommand{\PACKAGE}{\mbox{\kw{package}}}
\newcommand{\MAIN}{\mbox{\mathem main}}
\newcommand\GoSynCatName[1]{\mbox{\small #1}}
\newcommand{\methodSpecificationsRel}{{\mathem methods}}
\newcommand{\methodSpecifications}[2]{\methodSpecificationsRel(#1,#2)}
\newcommand{\fgEnv}{\Delta}
\newcommand{\EmptyFgEnv}{\{\}}
\newcommand{\turnsFG}{\, \vdash_{\mathsf{FG}} \,}
\newcommand{\subtypeOf}[2]{#1 <: #2}
\newcommand\assertOfSym{\mathrel{\scalebox{0.8}{\ensuremath{\searrow}}}}
\newcommand{\assertOf}[2]{#1 \assertOfSym #2}
\newcommand{\subtypeOfSym}{<:}
\newcommand{\fgSub}[3]{#1 \turnsFG \subtypeOf{#2}{#3}}
\newcommand{\FGStructNonRec}{{FG1}}
\newcommand{\FGUniqueFields}{{FG2}} 
\newcommand{\FGUniqueMethSpec}{{FG3}} 
\newcommand{\FGUniqueReceiver}{{FG4}} 
\newcommand{\EvCtx}{{\mathcal E}}
\newcommand{\reduceSym}{\longrightarrow}
\newcommand{\reduce}[2]{#1 \reduceSym  #2}
\newcommand{\reduceN}[2]{#1 \reduceSym^* #2} 
\newcommand{\subst}[2]{#1 \mapsto #2}
\newcommand{\reduceFGk}[4]{#2 \turnsFG #3 \reduceSym^{#1} #4} 
\newcommand{\reduceTL}[3]{#1 \turnsTL #2 \reduceSym #3}
\newcommand{\reduceTLN}[3]{#1 \turnsTL #2 \reduceSym^* #3} 
\newcommand{\reduceTLk}[4]{#2 \turnsTL #3 \reduceSym^{#1} #4} 
\newcommand{\irred}{{\mathem irred}}
\newcommand{\vbName}{\Phi}
\newcommand{\vbFG}{\vbName_{\mathsf{v}}} 
\newcommand{\xRightarrow}[2][]{\ext@arrow 0359\Rightarrowfill@{#1}{#2}}
\newcommand{\TL}{\mbox{TL}} 
\newcommand{\kT}{K}         
\newcommand\Angle[1]{\langle#1\rangle}
\newcommand{\pair}[2]{\Angle{#1,#2}}
\newcommand{\triple}[3]{\Angle{#1,#2, #3}}
\newcommand{\program}{\mathit{Prog}}
\newcommand{\clsT}{\mathit{Cls}}
\newcommand{\patT}{\mathit{Pat}}
\newcommand{\expT}{E}
\newcommand{\xT}{X}
\newcommand{\yT}{Y}
\newcommand{\xTrep}{\xT_{\mathit{rep}}}
\newcommand{\xTval}{\xT_{\mathit{val}}}
\newcommand{\vT}{V}
\newcommand{\uT}{V}                 
\newcommand{\ignore}[1]{}
\newcommand{\tcaseof}[2]{\CASE\ #1 \ \OF\ #2}
\newcommand{\tletrecin}[2]{\LET\ #1 \ \IN\ #2}  
\newcommand\turnsSomething[1]{\, \vdash_{\mathsf{#1}} \,}
\newcommand\turnsSomethingk[2]{\, \vdash^{#2}_{\mathsf{#1}} \,}
\newcommand{\turnsTL}{\turnsSomething{TL}}
\newcommand{\vbTL}{\vbName_{\mathsf{\uT}}}
\newcommand{\vbMethTL}{\vbName_{\mathsf m}}
\newcommand{\REvCtxT}{{\mathcal R}} 
\newcommand{\turnsG}[1]{\, \vdash_{#1} \,}
\newcommand{\trans}[4]{#2 \turnsG{#1} #3 \leadsto #4}
\newcommand{\transGrayBox}[4]{#2 \turnsG{#1} #3 \GrayBox{\leadsto #4}}
\newcommand{\tdExpTrans}[3]{\trans{\mathsf{exp}}{#1}{#2}{#3}}
\newcommand{\tdExpTransGrayBox}[3]{\transGrayBox{\mathsf{exp}}{#1}{#2}{#3}}
\newcommand{\tdMethTrans}[3]{\trans{\mathsf{meth}}{#1}{#2}{#3}}
\newcommand{\tdMethTransGrayBox}[3]{\transGrayBox{\mathsf{meth}}{#1}{#2}{#3}}
\newcommand{\tdProgTrans}[2]{ \turnsG{\mathsf{prog}} #1 \leadsto #2}
\newcommand{\tdProgTransGrayBox}[2]{ \turnsG{\mathsf{prog}} #1 \GrayBox{\leadsto #2}}
\newcommand{\tdUpcast}[3]{\trans{\mathsf{iCons}}{#1}{#2}{#3}}
\newcommand{\tdUpcastGrayBox}[3]{\transGrayBox{\mathsf{iCons}}{#1}{#2}{#3}}
\newcommand{\tdDowncast}[3]{\trans{\mathsf{iDestr}}{#1}{#2}{#3}}
\newcommand{\tdDowncastGrayBox}[3]{\transGrayBox{\mathsf{iDestr}}{#1}{#2}{#3}}
\newcommand{\distinct}[1]{{\mathem distinct}(#1)}
\newcommand{\methodLookup}[2]{{\mathem methodLookup}(#1, #2)}
\newcommand{\mapPerm}{\pi}               
\newcommand{\redLRk}[5]{#2 \turnsSomethingk{red-rel}{#1} #4 \approx_{#3} #5}
\newcommand\Sem[1]{\llbracket#1\rrbracket}
\newcommand{\redLRConfk}[4]{ #3 \approx #4 \in \Sem{#2}_{#1}^{\pair{\foreachN{D}}{\vbMethTL}}}
\newcommand{\RepK}[1]{{K_{\mathit{Rep}}}_{#1}}
\newcommand{\mT}[2]{\xT_{{#1},{#2}}}
\newcommand\sSynCatName[1]{\mbox{\small #1}}
\newcommand\sTransSection[2]{%
  \begin{tabularx}{\textwidth}{Xr}%
    #1\hfill & {\small \it #2}%
  \end{tabularx}%
}
\newcommand\sOver[2][!*NEVER USED ARGUMENT*!]{%
  \ifthenelse{\equal{#1}{!*NEVER USED ARGUMENT*!}}{\overline{#2}}{\overline{#2}^{#1}}%
}
\newcommand\sOptEmpty\bullet
\newcommand\RefTirName[1]{\TirName{#1}}
\newcommand\Rule[1]{\RefTirName{#1}}
\newcommand\inferruleLeft[3][]{\inferrule*[vcenter, Left=#1]{#2}{#3}}
\newcommand\GrayBox[1]{\mcolorbox{GrayBgColor}{#1}}
\newcommand{\bi}{\begin{array}[t]{@{}l@{}}}
\newcommand{\ei}{\end{array}}
\newcommand{\ba}{\begin{array}}
\newcommand{\ea}{\end{array}}
\newcommand{\bda}{\[\ba}
\newcommand{\eda}{\ea\]}
\newcommand{\bp}{\begin{quote}\tt\begin{tabbing}}
\newcommand{\ep}{\end{tabbing}\end{quote}}
\def\ruleform#1{{\setlength{\fboxrule}{0.5pt}\fbox{\normalsize \ensuremath{#1}}}}
\newcommand{\myirule}[2]{{\renewcommand{\arraystretch}{1.2}\ba{c} #1
                      \\ \hline #2 \ea}}
\newcommand{\figurebox}[1]
        {\fbox{\begin{minipage}{\textwidth} #1 \end{minipage}}}
\newcommand{\boxfig}[3]
           {\begin{figure*}\figurebox{#3}\vspace{-2ex}\caption{\label{#1}#2}\end{figure*}}
\newcommand\noteForall[1]{(\textrm{for all}~#1)}
\begin{document}

\title{A
      Dictionary-Passing Translation of Featherweight Go}

\author{Martin Sulzmann\inst1  \and Stefan Wehr\inst2}
\institute{
  Karlsruhe University of Applied Sciences, Germany\\
  \email{martin.sulzmann@h-ka.de}
  \and
  Offenburg University of Applied Sciences, Germany\\
  \email{stefan.wehr@hs-offenburg.de}
}

\maketitle

\begin{abstract}
  The Go programming language is an increasingly popular language but some of
  its features lack a formal investigation.
  This article explains Go's resolution mechanism for overloaded methods and
  its support for structural subtyping by
  means of translation from Featherweight Go to a simple target language.
  The translation employs a form of dictionary passing known from type classes
  in Haskell and preserves the
  dynamic behavior of Featherweight Go programs.
\end{abstract}

\section{Introduction}

The Go programming language~\cite{golang}, introduced by Google in 2009,
is syntactically close to C and incorporates features
that are well-established in other programming languages.
For example, a garbage collector as found in Java~~\cite{java16}, built-in support for concurrency
and channels in the style of Concurrent ML~\cite{cml-design-etc},
higher-order and anonymous functions known from functional languages such as Haskell~\cite{haskell2010}.
Go also supports method overloading for structures where related methods can be grouped together
using interfaces.
Unlike Java, where subtyping is nominal, Go supports structural subtyping among interfaces.

Earlier work by Griesmer and co-authors~\cite{FeatherweightGo}
introduces Featherweight Go (FG), a minimal core calculus that includes the essential features of Go.
Their work specifies static typing rules and a run-time method lookup semantics for FG.
However, the actual Go implementation appears to employ a different dynamic semantics.
Quoting Griesmer and co-workers:
\begin{quote}\it
Go is designed to enable efficient implementation. Structures are laid out in memory as a sequence of fields,
while an interface is a pair of a pointer to an underlying structure and a pointer to a dictionary of methods.
\end{quote}
To our knowledge, nobody has so far formalized such a dictionary-passing translation for FG and established its semantic
equivalence with the FG run-time method lookup dynamic semantics.
Hence, we make the following contributions:

\begin{itemize}
\item Section~\ref{sec:type-directed} specifies the translation of source FG programs to an
  untyped lambda calculus with pattern matching.
     We employ a dictionary-passing translation scheme \`a la type classes~\cite{Hall:1996:TCH:227699.227700}
     to statically resolve overloaded FG method calls. The translation
     is guided by the typing of the FG program.
   \item Section~\ref{sec:properties} establishes the semantic correctness of the dictionary-passing translation.
     The proof for this result is far from trivial.
     We require step-indexed logical relations~\cite{10.1007/11693024_6} as there can be
      cyclic dependencies between interfaces and method declarations.
\end{itemize}

Section~\ref{sec:featherweight-go} specifies Featherweight Go (FG)
and Section~\ref{sec:target-language} specifies our target language.
Section~\ref{sec:related-conc} covers related works and concludes.
The upcoming section gives an overview.

\section{Overview}

\boxfig{f:fg-equality}{Equality and ordering in FG and its translation}{
  \vspace{-1ex}
  \lstinputlisting[name=goexample,numbers=left,escapechar=@,language=Golang,mathescape=true]{aplas-example.go}
  \vspace{-1ex}
  \lstinputlisting[name=hsexample,firstnumber=23,numbers=left,escapechar=@,language=myhaskell,mathescape=true]{aplas-example.hs}
  \vspace{-1ex}
}

We introduce Featherweight Go~\cite{FeatherweightGo} (FG) by an example
and then present the ideas of our dictionary-passing translation for FG.

\subsection{FG by Example}
\label{sec:fg-by-example}

FG is a syntactic subset of the full Go language, supporting structures, methods and interfaces.
The upper part in Figure~\ref{f:fg-equality}, lines 1-22, shows an example slightly
adopted from~\cite{FeatherweightGo}.
The original example covers equality in FG.
We extend the example and include an ordering relation (less or equal than) as well.

FG programs consist of a sequence of declarations defining structures, methods, interfaces and
a main function. Method bodies in FG only consist of a return statement. For clarity,
we sometimes identify subexpressions via variable bindings introduced with
\texttt{var}.
In such a declaration, the name of a variable precedes its type,
the notation \texttt{var}\,\go{_} (line 21)
indicates that we do not care about the
variable name given to the main expression. The example uses primitive types \texttt{int} and
\texttt{bool} and several operations on values of these types (\texttt{==}, \texttt{\&\&}, \ldots).
These are not part of FG.

Structures in FG are similar to structures known from C/\CC{}.
A syntactic difference is the FG convention that field names precede the types.
In FG, structures and methods are always declared separately, whereas \CC{} groups methods
together in a
class declaration. Methods in FG can be overloaded on the \emph{receiver}.
The receiver is the value on which the method operates on.

Interfaces in FG consist of a set of method declarations that share the same receiver.
For example, interface \texttt{Eq} introduces method \texttt{eq}
and interface \texttt{Ord} introduces methods \texttt{eq} and \texttt{lt} (line~3 and 4).
The (leading) receiver argument is left implicit and
method names in interfaces must always be distinct.
Interfaces are types and can be used in type declarations for structures and methods.
For example, structure \texttt{Pair} defines two  fields \texttt{left} and \texttt{right},
each of type \texttt{Eq}.
Declarations of structures must be non-recursive whereas an interface
may appear in the method declaration of the interface itself.
For example, see interface \texttt{Eq}.

FG uses the keyword \texttt{func} to introduce
methods and functions. Methods can be distinguished from ordinary functions
as the receiver argument always precedes the method name.
In FG, the only function is the main function, all other declarations introduced by
\texttt{func} are methods.

Consider the method implementation of \texttt{eq} for receiver \texttt{this} of type \texttt{Int}
starting at line 6. This definition takes care of equality among \texttt{Int} values
by making use of primitive equality \texttt{==} among \texttt{int}.
We would expect argument \texttt{that} to be of type \texttt{Int}.
However, to be able to use an \texttt{Int} value everywhere an \texttt{Eq} value is expected (to
be discussed shortly),
the signature of \texttt{eq} for \texttt{Int} must match exactly the signature declared by
interface \texttt{Eq}.
Hence, \texttt{that} has declared type \texttt{Eq}, and
we resort to a type assertion, written \texttt{that.(Int)}, to convert it to \texttt{Int}.
Type assertions involve a run-time check that may fail.
The same observation applies to the implementation of \texttt{Eq} for receiver \texttt{Pair}
(line 9).

FG supports structural subtyping among structures and interfaces.
A structure is a subtype of an interface if the structure implements the methods as declared by the interface.
For example, \texttt{Int} and \texttt{Pair} both implement interface \texttt{Eq}.
This implies the structural subtype relations
(1) $\texttt{Int} \subtypeOfSym\ \texttt{Eq}$ and
(2) $\texttt{Pair} \subtypeOfSym\ \texttt{Eq}$.
Relation (1) ensures that the construction of the pair at line 20 type checks:
variables \texttt{i} and \texttt{j} have type \texttt{Int} but can also be viewed
as type \texttt{Eq} thanks to structural subtyping.
Relation (2) resolves the method call \texttt{p.eq(p)} at line 21
as the \texttt{Pair} variable \texttt{p} also has the type \texttt{Eq}.
The method definition starting at line 9 will be chosen.

An interface $I$ is a structural subtype of another interface $J$ if
$I$ contains all of $J$'s method declarations.
For example, the set of methods of interface \texttt{Ord} is a superset of the method set of \texttt{Eq}.
This implies (3) $\texttt{Ord} \subtypeOfSym\ \texttt{Eq}$, which
is used in the method implementation of \texttt{lt} for receiver type \texttt{Int}.
See line 14 where (3) yields that variable \texttt{that} with declared
type \texttt{Ord} also has type \texttt{Eq}.
Thus, the method call \texttt{this.eq(that)} is resolved via
the method definition from line 6.

\subsection{Dictionary-Passing Translation}

We translate FG programs by applying a form of dictionary-passing translation known from type classes~\cite{Hall:1996:TCH:227699.227700}.
As our target language we consider an untyped functional language with pattern matching where
we use Haskell style syntax for expressions and patterns.
Each FG interface translates to a pair consisting of a structure value and a dictionary.
The dictionary holds the set of methods available as specified by the interface
whereas the structure implements these methods.
We refer to such pairs as \emph{interface-values}.
The translation is type-directed as we need type information to resolve method calls and construct the
appropriate dictionaries and interface values.

Lines 23-49 show the result of applying our dictionary-passing\footnote{Technically, we are passing around
  interface-values wrapping dictionaries of methods.} translation scheme to the FG program (lines 1-22).
We use a tagged representation to encode FG structures in the target language.
Hence, for each structure $S$, we assume a data constructor \texttt{K$_S$},
where we use pattern matching to represent field access
(lines 23-26). For example, structure \texttt{Pair} implies the data constructor \texttt{K$_{\textit{Pair}}$}.
For convenience, we assume tuples and make use of don't care patterns \verb|_|.

A method call on an interface type translates to a lookup of the method in the dictionary
of the corresponding interface-value.
Like structures, interface-values are tagged in the target language.
For example, line 39 introduces the helper function \texttt{eq$_{\textit{Eq}}$} to perform method lookup
for method \texttt{eq} of interface \texttt{Eq}. The constructor for an \texttt{Eq} interface-value is
\texttt{K$_{\textit{Eq}}$}. Hence,
we pattern match on \texttt{K$_{\textit{Eq}}$} and extract the underlying structure value and method definition.
A method call such as \texttt{this.left.eq(}$\ldots$\texttt{)} in the source program (line 10) with receiver \texttt{this.left}
of type \texttt{Eq} then translates
to \texttt{eq$_{\textit{Eq}}$\,(left this)}\,$\ldots$ (line 41).

A method call on a structure translates to the method definition for this receiver type.
For example, we write \texttt{eq$_{\textit{Int}}$} to refer to the translation
of the method definition of
\texttt{eq} for receiver type \texttt{Int}.
A method call such as \texttt{this.eq(}$\ldots$\texttt{)} in the source program (line 14) with
receiver \texttt{this} of type \texttt{Int} then translates
to \texttt{eq$_{\textit{Int}}$\,this}\,$\ldots$ (line 43).

The construction of interface-values is based on structural subtype relations.
Recall the three structural subtype relations we have seen earlier:
(1) $\texttt{Int} \subtypeOfSym\ \texttt{Eq}$ and
(2) $\texttt{Pair} \subtypeOfSym\ \texttt{Eq}$
and (3) $\texttt{Ord} \subtypeOfSym\ \texttt{Eq}$.
Relation (1) implies the interface-value constructor \texttt{toEq$_{\textit{Int}}$} (line 29),
which builds an \texttt{Eq} interface-value via the given structure value \texttt{y} and a dictionary
consisting only of the method \texttt{eq$_{\textit{Int}}$}.
Relation (2) implies a similar interface-value constructor \texttt{toEq$_{\textit{Pair}}$} (line 30).
Relation (3) gives raise to the interface-value constructor \texttt{toEq$_{\textit{Ord}}$} (line 31),
which transforms some \texttt{Ord}
into an \texttt{Eq} interface-value.
We assume that in case a dictionary consists of several methods, methods are kept in fixed order.

Type assertions imply interface-value destructors.
For example, the source expression \texttt{that.(Int)} (line 7) performs a run-time check,
asserting that \texttt{that} has type \texttt{Int}.
In terms of the dictionary-passing translation, function \texttt{fromEq$_{\textit{Int}}$} (line 34)
performs this check.
Via the pattern \texttt{K$_{\textit{Eq}}$ (K$_{\textit{Int}}$ y,)}\verb| _)|,
we assert that the underlying target structure must result from \texttt{Int}.
If the interface-value contains a value not tagged with \texttt{K$_{\textit{\textit{Int}}}$}, the pattern
matching fails at run-time, just as the type assertion in FG.
Interface-value destructors \texttt{fromEq$_{\textit{Pair}}$} and \texttt{fromOrd$_{\textit{Int}}$}
(lines~35, 36) result from similar uses of type assertions.

To summarize, each use of structural subtyping implies a interface-value constructor being inserted
in the target program. For example, typing the source expression
\texttt{p.eq(p)} in line 21
relies on structural subtyping $\texttt{Pair} \subtypeOfSym \texttt{Eq}$ because argument \texttt{p}
has type \texttt{Pair} but method \texttt{eq} requires a parameter of type \texttt{Eq}.
Thus, the translation of this expression is
\texttt{eq$_{\textit{Pair}}$ p (toEq$_\textit{Pair}$ p)} in line 49.

Similarly, type assertions imply interface-value destructors.
For example, the source expression
\texttt{that.(Pair).left} in line 10
use a type assertion on \texttt{that}, which has type \texttt{Eq}.
Thus, it translates to the target expression
\texttt{left (fromEq$_{\textit{Pair}}$ that)} in line 41.

We continue by introducing FG and our target language followed by the full details of the
dictionary-passing translation.

\section{Featherweight Go}
\label{sec:featherweight-go}

\boxfig{f:fg}{Featherweight Go (FG)}{

\bda{ll}
{
  \ba{ll}
  \GoSynCatName{Field name}     & f
\\ \GoSynCatName{Method name}   & m
\\ \GoSynCatName{Variable name} & x,y
\\ \GoSynCatName{Structure type name}    & t_S, u_S
\\ \GoSynCatName{Interface type name}    & t_I, u_I
\\ \GoSynCatName{Type name}              & t, u ::= t_S \mid t_I
\\ \GoSynCatName{Method signature}       & M ::= (\overline{x_i \ t_i}) \ t
\\ \GoSynCatName{Method specification}   & R, S ::= m M
\ea
}

&

\ba{llc}
\GoSynCatName{Expression} & d,e ::=
\\ \quad \GoSynCatName{Variable}          & \quad x & \mid
\\ \quad \GoSynCatName{Method call}       & \quad e.m(\overline{e}) & \mid
\\ \quad \GoSynCatName{Structure literal} & \quad t_S \{ \overline{e} \} & \mid
\\ \quad \GoSynCatName{Select}            & \quad e.f & \mid
\\ \quad \GoSynCatName{Type assertion}    & \quad e.(t)
\ea

\eda
\vspace{-1ex}
\bda{lcl}

\ba{llc}
\GoSynCatName{Type literal} & L ::=
\\ \quad \GoSynCatName{Structure} & \quad \STRUCT\ \{ \overline{f \ t} \} & \mid
\\ \quad \GoSynCatName{Interface} & \quad \INTERFACE\ \{ \overline{S} \}
\ea

&
\gap
&

\ba{llc}
\GoSynCatName{Declaration} & D ::=
\\ \quad \GoSynCatName{Type}   & \quad \TYPE\ t \ L & \mid
\\ \quad \GoSynCatName{Method} & \quad \FUNC\ (x \ t_S) \ mM \ \{ \RETURN\ e \}
\ea
\eda
\vspace{-1ex}
\bda{llcl}
\GoSynCatName{Program} & P & ::= & 
          \overline{D} \ \FUNC\ \MAIN () \{ \_ = e \}
\eda

\sTransSection{
  \ruleform{\fgSub{\foreachN{D}}{t}{u}}
}{Subtyping}
  \begin{mathpar}
    \inferrule[methods-struct]
  {}
  {
    \methodSpecifications{\foreachN{D}}{t_S} = \{ m M \mid \FUNC\ (x \ t_S) \ mM \ \{ \RETURN\ e \} \in  \foreachN{D} \}
  }

  \inferrule[methods-iface]{
    \TYPE\ t_I \ \INTERFACE\ \{ \foreachN{S} \} \in \foreachN{D}
  }
  { \methodSpecifications{\foreachN{D}}{t_I} = \{ \foreachN{S} \}
  }

  \inferrule[sub-struct-refl]{
   }
  { \fgSub{\foreachN{D}}{t_S}{t_S}
  }

  \inferrule[sub-iface]{
   \methodSpecifications{\foreachN{D}}{t} \supseteq \methodSpecifications{\foreachN{D}}{u_I}
  }
  { \fgSub{\foreachN{D}}{t}{u_I}
  }

  \end{mathpar}
\sTransSection{
  \ruleform{\foreachN{D} \turnsFG \reduce{d}{e}}
}{
  Reductions
}
  \bda{llrl}
  \sSynCatName{Value} &
  v & ::= & t_S \{ \foreachN{v} \}
  \\
  \sSynCatName{Evaluation context} &
  \EvCtx & ::= & []
   \mid t_S \{ \foreachN{v}, \EvCtx, \foreachN{e} \}
   \mid \EvCtx.f
   \mid \EvCtx.(t)
   \mid \EvCtx.m(\foreachN{e})
   \mid v.m(\foreachN{v}, \EvCtx, \foreachN{e})
   \\
      \sSynCatName{Substitution (FG values)} &
   \vbFG & ::= & \Angle{\foreachN{\subst{x_i}{v_i}}}
\eda
\begin{mathpar}
 \inferrule[fg-context]
 {\foreachN{D} \turnsFG \reduce{d}{e}
 }
 {\foreachN{D} \turnsFG \reduce{\EvCtx [d]}{\EvCtx [e]}
 }

 \inferrule[fg-field]
 { \TYPE\ t_S \ \STRUCT\ \{ \foreachN{f \ t} \} \in  \foreachN{D}
 }
 { \foreachN{D} \turnsFG \reduce{t_S \{ \foreachN{v} \}.f_i}{v_i}
 }

 \inferrule[fg-call]
  { v = t_S \{ \foreachN{v} \}
    \\ \FUNC\ (x \ t_S) \ m (\foreachN{x \ t}) \ t \ \{ \RETURN\ e \} \in \foreachN{D}
 }
 {\foreachN{D} \turnsFG \reduce{v.m(\foreachN{v})}{\Angle{\subst{x}{v}, \foreachN{\subst{x_i}{v_i}}} e}
 }

 \inferrule[fg-assert]
           { v = t_S \{ \foreachN{v} \}
              \\ \fgSub{\foreachN{D}}{t_S}{t}
 }
 { \foreachN{D} \turnsFG \reduce{v.(t)}{v}
 }
\end{mathpar}

}

Featherweight Go (FG)~\cite{FeatherweightGo} is a tiny fragment of Go containing
only structures, methods and interfaces.
Figure~\ref{f:fg} gives the syntax of FG.
With the exception of variable bindings in function bodies, the primitive type \texttt{int} with operations
\texttt{==} and \texttt{<}, and the primitive type \texttt{bool} with operations
\texttt{\&\&} and \texttt{||},
we can represent the example from Figure~\ref{f:fg} in FG.
Compared to the original presentation of FG~\cite{FeatherweightGo} we use symbol $L$ instead of $T$ (for type literals),
and omit the \PACKAGE\ keyword at the start of a FG program.
Overbar notation $\sOver[n]{\xi}$ denotes the sequence $\xi_1 \ldots \xi_n$ for some
syntactic construct $\xi$, where in some places commas separate the sequence
items.
If irrelevant, we omit the $n$ and simply write
$\sOver{\xi}$. Using the index variable $i$ under an overbar
marks the parts that vary from sequence item to sequence
item; for example, $\sOver[n]{\xi'\,\xi_i}$ abbreviates
$\xi'\,\xi_1\ldots\xi'\,\xi_n$
and $\sOver[q]{\xi_j}$ abbreviates
$\xi_{j1}\,\ldots\,\xi_{jq}$.

FG is a statically typed language.
For brevity, we omit a detailed description of the FG typing rules
as they will show up in the type-directed translation.
The following conditions must be satisfied.
\begin{description}
\item[\FGStructNonRec:] Structures must be non-recursive.
\item[\FGUniqueFields:] For each struct, field names must be distinct.
\item[\FGUniqueMethSpec:] For each interface, method names must be distinct.
\item[\FGUniqueReceiver:] Each method declaration is uniquely identified by the receiver type and method name.
\end{description}

FG supports structural subtyping, written $\fgSub{\foreachN{D}}{t}{u}$.
A struct $t_S$ is a subtype of an interface $t_I$ if $t_S$ implements all the methods
specified by the interface $t_I$. An interface $t_I$ is a subtype of another interface $u_I$ if
the methods specified by $t_I$ are a superset of the methods specified by $u_I$.
The structural subtyping relations are specified in the middle part of Figure~\ref{f:fg}.

Next, we consider the dynamic semantics of FG.
The bottom part of Figure~\ref{f:fg} specifies the reduction of FG programs by making use of structural operational semantics rules
of the form $\foreachN{D} \turnsFG \reduce{d}{e}$
to reduce expression $d$ to expression $e$ under the  sequence $\foreachN{D}$ of declarations.

Rule \Rule{fg-context} makes use of evaluation contexts with holes to apply a reduction inside an expression.
Rule \Rule{fg-field} deals with field access.
Condition \FGUniqueFields\ guarantees that field name lookup is unambiguous.
Rule \Rule{fg-call} reduces method calls.
Condition \FGUniqueReceiver\ guarantees that method lookup is unambiguous.
The method call is reduced to the method body $e$ where we map the receiver argument to a concrete value $v$
and method arguments $x_i$ to concrete values $v_i$.
This is achieved by applying the substitution $\Angle{\subst{x}{v}, \foreachN{\subst{x_i}{v_i}}}$ on $e$,
written $\Angle{\subst{x}{v}, \foreachN{\subst{x_i}{v_i}}} e$.

Rule \Rule{fg-assert} covers type assertions.
We need to check that the type $t_S$ of value $v$ is consistent with the type $t$ asserted in the program text.
If $t$ is an interface, then $t_S$ must implement all the methods as specified by this interface.
If $t$ is a struct type, then $t$ must be equal to $t_S$.
Both checks can be carried out by checking that $t_S$ and $t$ are in a structural subtype relation.

We write $\foreachN{D} \turnsFG \reduceN{e}{v}$ to denote that under
the declarations $\foreachN{D}$, expression $e$ reduces to the value $v$
in a finite number of steps.
We write $\reduceFGk{k}{\foreachN{D}}{e}{v}$ to denote that under
the declarations $\foreachN{D}$, expression $e$ reduces to the value $v$
within at most $k$ steps. This means we might need fewer than $k$ steps
but $k$ are clearly sufficient to reduce the expression to some value.
If there is no such $v$ for any number of steps,
we say that $e$ is \emph{irreducible} w.r.t.~$\foreachN{D}$, written $\irred(\foreachN{D},e)$.

\section{Target Language}
\label{sec:target-language}

\boxfig{f:target-lang}{Target Language (TL)}{
  \vspace{-2ex}
  \bda{lcr}
  {
    \ba{lll}
    \sSynCatName{Expression} & \expT ::=
    \\ \quad \sSynCatName{Variable}    & \quad \xT \mid \yT         & \mid
    \\ \quad \sSynCatName{Constructor} & \quad \kT                  & \mid
    \\ \quad \sSynCatName{Application} & \quad \expT \ \expT        & \mid \
    \\ \quad \sSynCatName{Abstraction} & \quad \lambda \xT. \expT   & \mid
    \\ \quad \sSynCatName{Pattern case} & \quad \tcaseof{\expT}{[\foreachN{\clsT}]}
    \ea
  }
  &~\quad~&
  {
    \ba{lrl}
     \sSynCatName{Pattern clause} &
     \clsT & ::=  \patT \rightarrow \expT
     \\
     \sSynCatName{Pattern} &
     \patT & ::= \kT \ \foreachN{\xT}
     \\
    \sSynCatName{Program}  &
    \program & ::= {\begin{array}[t]{l}\arraycolsep=0pt
      \LET\ \foreachN{\yT_i = \lambda \xT_i . \expT_i}\\
      \IN\ \expT
      \end{array}}
     \ea
   }
   \eda
  \bda{llrl}
  \sSynCatName{TL values} &
  \uT & ::= & \xT  \mid \kT\ \foreachN{\uT}
  \\
  \sSynCatName{TL evaluation context} &
  \REvCtxT & ::= & []
   \mid \kT\  \foreachN{\uT} \REvCtxT \foreachN{\expT}
   \mid \tcaseof{\REvCtxT}{[\foreachN{\patT \rightarrow \expT}]}
   \mid \REvCtxT\ \expT
   \mid \uT\ \REvCtxT
  \\
     \sSynCatName{Substitution (TL values)} &
   \vbTL & ::= & \Angle{\foreachN{\subst{\xT}{\uT}}}
  \\
     \sSynCatName{Substitution (TL methods)} &
   \vbMethTL & ::= & \Angle{\foreachN{\subst{\yT}{\lambda \xT. \expT}}}
   \eda

\vspace{1ex}
\sTransSection{
  \ruleform{\reduceTL{\vbMethTL}{\expT}{\expT'}}
}{
  TL expression reductions
}
\vspace{-0.5ex}
\begin{mathpar}
 \inferrule[tl-context]
 {\reduceTL{\vbMethTL}{\expT}{\expT'}
 }
 {\reduceTL{\vbMethTL}{\REvCtxT [\expT]}{\REvCtxT [\expT']}
 }

 \inferrule[tl-lambda]
 {}
 { \reduceTL{\vbMethTL}{(\lambda \xT. \expT) \ \uT}{\Angle{\subst{\xT}{\uT}} \expT}
 }

 \inferrule[tl-case]
 { \kT \ \foreach{\xT_i}{n} \rightarrow \expT' \in [\foreachN{\patT \rightarrow \expT}]
 }
 {\reduceTL{\vbMethTL}{\tcaseof{\kT \ \foreach{\uT_i}{n}}{{[\foreachN{\patT \rightarrow \expT}]}}}
         {\Angle{\foreach{\subst{\xT_i}{\uT_i}}{n}} \expT'}
 }

 \inferrule[tl-method]
 {}
 { \reduceTL{\vbMethTL}{\yT \ \expT}{\vbMethTL(\yT) \ \expT}
 }

\end{mathpar}

\vspace{1ex}
\sTransSection{
  \ruleform{\reduceTL{}{\program}{\program'}}
}{
  TL reductions
}
\vspace{-0.5ex}
\begin{mathpar}
    \inferrule[tl-prog]
 {\reduceTL{\Angle{\foreachN{\subst{\yT_i}{\lambda \xT_i. \expT_i}}}}{\expT}{\expT'}  }
 { \reduceTL{}{\tletrecin{\foreachN{\yT_i = \lambda \xT_i.\expT_i}}{\expT}}{\tletrecin{\foreachN{\yT_i = \lambda \xT_i. \expT_i}}{\expT'}}
 }

\end{mathpar}
} 

Figure~\ref{f:target-lang} specifies the syntax and dynamic semantics of our target language (\TL).
We use capital letters for constructs of the target language. Target expressions $E$ include
variables $X,Y$, data constructors $K$,
function application, lambda abstraction and case expressions to pattern match against constructors.
In a case expression with only one pattern clause,
we often omit the brackets and just write
$\tcaseof{\expT}{\patT \rightarrow \expT}$.
A program consists of a sequence of function definitions and a (main) expression.
The function definitions are the result of translating FG method definitions.

We assume data constructors for tuples up to some fixed but arbitrary size. The syntax
$(\foreach{E}{n})$ constructs an $n$-tuple when used as an expression, and deconstructs
it when used in a pattern context.
At some places, we use nested patterns as an abbreviation for nested case expressions.
The notation $\lambda \patT. \expT$ stands for
$\lambda X . \tcaseof{X}{[\patT \rightarrow \expT]}$, where $X$ is fresh.

Representing the example from Figure~\ref{f:fg} in the target language
requires some more straightforward extensions:
integers with operations
\texttt{==} and \texttt{<}, booleans with operations
\texttt{\&\&} and \texttt{||}, let-bindings inside expressions,
and top-level bindings. The target language can encode the last two features
via lambda-abstractions and
top-level let-bindings.

The structural operational semantics employs two types of substitutions.
Substitution $\vbTL$ records the bindings resulting from pattern matching and function applications.
Substitution $\vbMethTL$ records the bindings for translated method definitions (i.e. for
top-level let-bindings).
Target values consist of constructors and variables.
A variable may be a value if it refers to a yet to be evaluated method binding.

Reduction of programs is mapped to reduction of expressions under a method substitution.
See rule \Rule{tl-prog}.
The remaining reduction rules are standard.

We write $\reduceTLN{\vbMethTL}{\expT}{\uT}$  to denote that under
substitution $\vbMethTL$, expression $\expT$ reduces to the value $\uT$
in a finite number of steps.
We write $\reduceTLk{k}{\vbMethTL}{\expT}{\uT}$ to denote that under substitution $\vbMethTL$,
expression $\expT$ reduces to $\uT$
within at most $k$ steps. This means we might need fewer than $k$ steps
but $k$ are clearly sufficient.
If there is no such $\uT$ for any number of steps,
we say that $\expT$ is \emph{irreducible} w.r.t.~$\vbMethTL$, written $\irred(\vbMethTL,\expT)$.

\section{Dictionary-Passing Translation}
\label{sec:type-directed}

We formalize the dictionary-passing translation of FG to TL.
The translation rules are split over two figures.
Figure~\ref{f:type-directed-meth-prog-dec} covers methods, programs and some expressions.
Figure~\ref{f:upcast-downcast} covers structural subtyping and type assertions.
The translation rules are guided by type checking the FG program.
The gray shaded parts highlight target terms that are generated.
If these parts are ignored, the translation rules are effectively equivalent
to the FG type checking rules~\cite{FeatherweightGo}.
We assume that conditions FG1-4 hold as well.

We use the following conventions.
We assume that each FG variable $x$ translates to the TL variable $\xT$.
For each structure $t_S$ we introduce a TL constructor $K_{t_S}$.
For each interface $t_I$ we introduce a TL constructor $K_{t_I}$.
In the translation, a source value of an interface type $t_I$ translates to an interface-value tagged by $K_{t_I}$.
The interface-value contains the underlying structure value and a dictionary consisting of the set of methods as specified by the interface.
For each method declaration $\FUNC\ (x \ {t_S}) \ m M \ \{ \RETURN\ e \}$ we introduce
a TL variable $\mT{m}{t_S}$, thereby relying on \FGUniqueReceiver{} which guarantees that $m$ and
$t_S$ uniquely identify this declaration.
We write $\fgEnv$ to denote typing environments where we record the types of FG variables.
The notation $[n]$ is a short-hand for the set $\{1,\dots,n\}$.

\subsection{Translating programs, methods and expressions}

\boxfig{f:type-directed-meth-prog-dec}{Translation of methods, programs and expressions}{
  Convention for mapping source to target terms
  \vspace{-1ex}
  \begin{mathpar}
    x \leadsto \xT
    \quad
    t_S \leadsto K_{t_S}
    \quad
    t_I \leadsto K_{t_I}
    \quad
    \FUNC\ (x \ {t_S}) \ m M \ \{ \RETURN\ e \} \leadsto \mT{m}{t_S}
  \end{mathpar}
  \bda{llcl}
\mbox{FG Environment} & \fgEnv & ::= & \{ \} \mid \{ x : t \} \mid \fgEnv \cup \fgEnv
  \eda
  \vspace{0.5ex}

\sTransSection{
  \ruleform{\tdMethTransGrayBox{\foreachN{D}}{\FUNC\ (x \ t_S) \ m (\foreachN{x \ t}) \ t}{\expT} }
}{
  Translating method declarations
}
  \bda{c}
  \inferruleLeft[td-method]
         { \distinct{x,\foreach{x}{n}}
           \\ \tdExpTransGrayBox{\pair{\foreachN{D}}{\{ x : t_S, \foreach{x_i : t_i}{n} \}}}{e : t}{\expT}
          }
          { \tdMethTransGrayBox{\foreachN{D}}
                        {\FUNC\ (x \ t_S) \ m (\foreach{x \ t}{n}) \ t \ \{ \RETURN\ e \}}
                        {\lambda \xT . \lambda (\foreach{\xT }{n}). \expT}
          }

  \eda

\sTransSection{
  \ruleform{\tdProgTransGrayBox{P}{\program}}
}{
  Translating programs
}
  \bda{c}
 \inferrule[td-prog]
 {\tdExpTransGrayBox{\pair{\foreachN{D}}{\EmptyFgEnv}}{e : t}{\expT} \\\\
   \tdMethTransGrayBox{\foreachN{D}}{D_i'}{\expT_i}\\
   \\D_i' = \FUNC\ (x_i \ {t_S}_i) \ m_i M_i \ \{ \RETURN\ e_i \}\\
   (\textrm{for all}~i \in [n],
   \textrm{where}~\foreach{D'}{n}~\textrm{are the}~\FUNC~\textrm{declarations in}~\foreachN{D})\\
 }
 { \tdProgTransGrayBox{\ignore{\PACKAGE\ \MAIN;} \foreachN{D} \ \FUNC\ \MAIN () \{ \_ = e \}}
   { \LET\ \foreach{\mT{m_i}{{t_S}_i} = \expT_i}{n} \ \IN\ \expT}
 }
  \eda

  \sTransSection{
    \ruleform{\tdExpTransGrayBox{\pair{\foreachN{D}}{\fgEnv}}{e : t}{\expT}}
  }{Translating expressions}
  \begin{mathpar}
  \inferrule[td-var]
            { (x : t) \in \fgEnv
            }
            { \tdExpTransGrayBox{\pair{\foreachN{D}}{\fgEnv}}{x : t}{\xT}
            }

  \inferrule[td-struct]
  {\TYPE\ t_S \ \STRUCT\ \{ \foreach{f \ t}{n} \} \in  \foreachN{D}
               \\   \tdExpTransGrayBox{\pair{\foreachN{D}}{\fgEnv}}{e_i : t_i}{\expT_i}\quad\noteForall{i \in [n]}
            }
            { \tdExpTransGrayBox{\pair{\foreachN{D}}{\fgEnv}}{t_S \{ \foreach{e}{n} \} : t_S }{\kT_{t_S} \ (\foreach{\expT}{n})}
            }

  \inferrule[td-access]
  {\tdExpTransGrayBox{\pair{\foreachN{D}}{\fgEnv}}{e : t_S}{\expT}
             \\  \TYPE\ t_S \ \STRUCT\ \{ \foreach{f \ t}{n} \} \in  \foreachN{D}
          }
           { \tdExpTransGrayBox{\pair{\foreachN{D}}{\fgEnv}}
                        {e.f_i : t_i }
                        { \CASE\ \expT\ \OF\ \kT_{t_S} \ (\foreach{\xT}{n}) \rightarrow \xT_i}
           }

  \inferrule[td-call-struct]
            { m (\foreach{x \ t}{n}) \ t \in \methodSpecifications{\foreachN{D}}{t_S}
              \\
              \tdExpTransGrayBox{\pair{\foreachN{D}}{\fgEnv}}{e : t_S}{\expT}
              \\ \tdExpTransGrayBox{\pair{\foreachN{D}}{\fgEnv}}{e_i : t_i}{\expT_i}\quad\noteForall{i \in [n]}
          }
          { \tdExpTransGrayBox{\pair{\foreachN{D}}{\fgEnv}}{e.m(\foreach{e}{n}) : t}{\mT{m}{t_S} \ \expT \ (\foreach{\expT}{n}) } }

  \inferrule[td-call-iface]
  {\tdExpTransGrayBox{\pair{\foreachN{D}}{\fgEnv}}{e : t_I}{\expT}
    \\ \TYPE\ t_I \ \INTERFACE\ \{ \foreachN{S} \} \in \foreachN{D}
    \\ S_j = m(\foreach{x \ t}{n})\,t
    \\ \tdExpTransGrayBox{\pair{\foreachN{D}}{\fgEnv}}{e_i : t_i}{\expT_i}\quad\noteForall{i \in [n]}
    \\ X,\foreach{X}{q}\textrm{~fresh}
  }
  { \tdExpTransGrayBox{\pair{\foreachN{D}}{\fgEnv}}
    {e.m(\foreach{e}{n}) : t}
    {\CASE\ \expT \ \OF\ \kT_{t_I} \ (\xT, \foreach{\xT}{q}) \rightarrow \xT_j \ \xT \ (\foreach{\expT}{n}) }
  }
\end{mathpar}

}

\boxfig{f:upcast-downcast}{Translation of structural subtyping and type assertions}{

    \sTransSection{
    \ruleform{\tdExpTransGrayBox{\pair{\foreachN{D}}{\fgEnv}}{e : t}{\expT}}
  }{Translating structural subtyping and type assertions}
  \begin{mathpar}
  \inferrule[td-sub]
            {\tdExpTransGrayBox{\pair{\foreachN{D}}{\fgEnv}}{e : t}{\expT_2}
          \\ \tdUpcastGrayBox{\foreachN{D}}{\subtypeOf{t}{u}}{\expT_1}
          }
            { \tdExpTransGrayBox{\pair{\foreachN{D}}{\fgEnv}}{e : u}{\expT_1 \ \expT_2} }

  \inferrule[td-assert]
            {\tdExpTransGrayBox{\pair{\foreachN{D}}{\fgEnv}}{e : t_I}{\expT_2}
          \\ \tdDowncastGrayBox{\foreachN{D}}{\assertOf{t_I}{u}}{\expT_1}
          }
          { \tdExpTransGrayBox{\pair{\foreachN{D}}{\fgEnv}}{e.(u) : u}{\expT_1 \ \expT_2} }

  \end{mathpar}

  \sTransSection{
  \ruleform{\tdUpcastGrayBox{\foreachN{D}}{\subtypeOf{t}{u_I}}{\expT}}
}{
  Interface-value construction
}
\begin{mathpar}
\inferrule[td-cons-struct-iface]
          {\TYPE\ t_I \ \INTERFACE\ \{ \foreachN{S} \} \in \foreachN{D}
        \\ \methodSpecifications{\foreachN{D}}{t_S} \supseteq \foreachN{S}
        \\ \foreachN{S} = \foreach{m M}{n}
          }
          {\tdUpcastGrayBox{\foreachN{D}}{\subtypeOf{t_S}{t_I}}
                    {\lambda \xT.
                      \kT_{t_I} \ (\xT, \foreach{\mT{m_i}{t_S}}{n}) }
         }

\inferrule[td-cons-iface-iface]
          {\TYPE\ t_I \ \INTERFACE\ \{ \foreach{R}{n} \} \in \foreachN{D}
       \\ \TYPE\ u_I \ \INTERFACE\ \{ \foreach{S}{q} \} \in \foreachN{D}
       \\  S_i = R_{\mapPerm(i)} \quad\noteForall{i \in [q]}
        }
          {\tdUpcastGrayBox{\foreachN{D}}{\subtypeOf{t_I}{u_I}}
                    { \lambda \xT. \CASE\,\xT\,\OF\
                          \kT_{t_I} \ (\xT, \foreach{\xT}{n})
                          \rightarrow
                          \kT_{u_I} \ (\xT, \xT_{\mapPerm(1)}, \ldots, \xT_{\mapPerm(q)})
                    }
         }
\end{mathpar}

\sTransSection{
  \ruleform{\tdDowncastGrayBox{\foreachN{D}}{\assertOf{t_I}{u}}{E}}
}{
  Interface-value destruction
}
\begin{mathpar}
  \inferrule[td-destr-iface-struct]
            {\TYPE\ t_I \ \INTERFACE\ \{ \foreach{R}{n} \} \in \foreachN{D}
             \\ \fgSub{\foreachN{D}}{t_S}{t_I}
            }
            {\tdDowncastGrayBox{\foreachN{D}}{\assertOf{t_I}{t_S}}{\lambda \xT.  \CASE\,\xT\,\OF\
                    \kT_{t_I} \ (K_{t_S} \ Y, \foreach{\xT}{n})
                      \rightarrow K_{t_S} \, Y
                   }
            }

\inferrule[td-destr-iface-iface]
{ X, Y, Y', \foreach{X}{n}~\textrm{fresh}\\
  \TYPE\ t_I \ \INTERFACE\ \{ \foreach{R}{n} \} \in \foreachN{D}
            \\
            \GrayBox{
              \textrm{for all}~\TYPE\ t_{Sj}\ \STRUCT\ \{ \foreachN{f \ u} \} \in \foreachN{D}
              ~\textrm{with}~ \tdUpcast{\foreachN{D}}{\subtypeOf{t_{Sj}}{u_I}}{\expT_j}
              \textrm{:}
            }
            \\
            \GrayBox{
              \clsT_j =
              \kT_{t_{Sj}} \ Y' \rightarrow (\expT_j \ (\kT_{t_{Sj}} \ Y'))
            }
          }
          { \tdDowncastGrayBox{\foreachN{D}}{\assertOf{t_I}{u_I}}{\lambda \xT.
                             \CASE\ \xT \ \OF\
                             \kT_{t_I}\, (\yT, \foreach{\xT}{n}) \rightarrow \CASE\ \yT  \ \OF\
                             [\foreachN{\clsT}]}
          }
\end{mathpar}
}

The translation of programs and methods boils down to the translation of expressions involved.
Rule \Rule{td-method} translates a specific method declaration,
rule \Rule{td-prog} collects all method declarations and also translates the main expression.
See Figure~\ref{f:type-directed-meth-prog-dec}.

The translation rules for expressions are of the form
$\tdExpTrans{\pair{\foreachN{D}}{\fgEnv}}{e : t}{\expT}$
where $\foreachN{D}$ refers to the sequence of FG declarations,
$\fgEnv$ refers to type binding of local variables,
$e$ is the to be translated FG expression,
$t$ its type and $\expT$ the resulting target term.
Departing from FG's original typing rules~\cite{FeatherweightGo},
the translation rules are non-syntax directed due the structural subtyping rule \Rule{td-sub}
defined in Figure~\ref{f:upcast-downcast}.
We could integrate this rule via the other rules but this would make all
the rules harder to read. Hence, we prefer to have a separate rule \Rule{td-sub}.

We now discuss the translations rules for the expression forms in
Figure~\ref{f:type-directed-meth-prog-dec}. (The remaining expression forms
are covered in Figure~\ref{f:upcast-downcast}, to be explained in the next section.)
Rule \Rule{td-var} translates variables and follows our convention that $x$ translates to $\xT$.
Rule \Rule{td-struct} translates a structure creation. The translated field elements $\expT_i$
are collected in a tuple and tagged via the constructor $K_{t_S}$.
Rule \Rule{td-access} uses pattern matching to capture field access in the translation.

Method calls are dealt with by rules \Rule{td-call-struct} and \Rule{td-call-iface}.
Rule \Rule{td-call-struct} covers the case that the receiver $e$ is of the structure type $t_S$.
The first precondition guarantees that an implementation for this specific method call exists.
(See Figure~\ref{f:fg} for the auxiliary $\methodSpecificationsRel$.)
Hence, we can assume that we have available a corresponding definition for $\mT{m}{t_S}$ in our translation.
The method call then translates to applying $\mT{m}{t_S}$ first on the translated receiver $\expT$,
followed by the translated arguments collected in a tuple $(\foreach{\expT}{n})$.

Rule \Rule{td-call-iface} assumes that receiver $e$ is of interface type $t_I$,
so $e$ translates to interface-value $E$.
Hence, we pattern match on $E$
to access the underlying value and the desired method in the dictionary.
We assume that the order of methods in the dictionary corresponds to the order
of method declarations in the interface.
The preconditions guarantee that $t_I$ provides a method $m$ as demanded by the method call,
where $j$ denotes the index of $m$ in interface $t_I$.

\subsection{Translating structural subtyping and type assertions}

Rule \Rule{td-sub} deals with structural subtyping and yields an interface-value constructor
derived via rules \Rule{td-cons-struct-iface} and \Rule{td-cons-iface-iface} in Figure~\ref{f:upcast-downcast}.
These rules correspond to the structural subtyping rules in Figure~\ref{f:fg}
but additionally yield an interface-value constructor.

The preconditions in rule \Rule{td-cons-struct-iface} check that structure $t_S$ implements the interface $t_I$.
This guarantees the existence of method definitions $\mT{m_i}{t_S}$.
Hence, we can construct the desired interface-value.

The preconditions in rule \Rule{td-cons-iface-iface} check that $t_I$'s methods are a superset of $u_I$'s methods.
This is done via the total function $\mapPerm : \{1,\ldots,q\} \to \{1,\ldots,n\}$ that matches each (wanted) method in $u_I$ against a (given) method in $t_I$.
We use pattern matching over the $t_I$'s interface-value to extract the wanted methods.
Recall that dictionaries maintain the order of method as specified by the interface.

Type assertions $e.(u)$ are dealt with in rule \Rule{td-assert} and translate to an interface-value destructor.
In the static semantics of FG there are two cases to consider.
Both cases assume that the expression $e$ is of some interface type $t_I$.
The first case asserts the type of a structure and the second case asserts the type of an interface.
Asserting that a structure is
of the type of another structure is not allowed in FG, because such a type assertion would never succeed.

Rule \Rule{td-destr-iface-struct} deals with the case that we assert the type of a structure $t_S$.
If $t_S$ does not implement the interface $t_I$, the assertion can never be successful.
Hence, we find the precondition $\fgSub{\foreachN{D}}{t_S}{t_I}$.
We pattern match over the interface-value that represents $t_I$ to check the underlying value matches $t_S$
and extract the value.
It is possible that some other value has been used to implement the interface-value that represents $t_I$.
In such a case, the pattern match fails and we experience run-time failure.

Rule \Rule{td-destr-iface-iface} deals with the case that we assert the type of an interface $u_I$
on a value of type $t_I$.
The outer case expression extracts the value $Y$ underlying interface-value $t_I$
(this case never fails).
We then check if we can construct an interface-value for $u_I$ via $Y$.
This is done via an inner case expression.
For each structure $t_{Sj}$ implementing $u_I$,
we have a pattern clause $\clsT_j$ that
matches against the constructor $\kT_{t_{Sj}}$ of the structure and then constructs an interface-value for $u_I$.
There are two reasons for run-time failure here.
First, $\yT$ (used to implement $t_I$) might not implement $u_I$;
that is, none of the pattern clauses $\clsT_j$ match.
Second, $[\foreachN{\clsT}]$ might be empty because
no receiver at all implements $u_I$. This case is rather unlikely
and could be caught statically.

\section{Properties}
\label{sec:properties}

\boxfig{f:reduce-rel-fg-tl}{Relating FG to \TL\ Reduction}{

  \sTransSection{
   \ruleform{\redLRConfk{k}{t}{e}{\expT}}
  }{FG expressions versus \TL\ expressions}
\begin{mathpar}
 \inferrule[red-rel-exp]
           {  \forall k_1 < k, k_2 < k, v, \uT.
             (k  - k_1 - k_2 > 0 \wedge
             \reduceFGk{k_1}{\foreachN{D}}{e}{v}
             \wedge
             \reduceTLk{k_2}{\vbMethTL}{\expT}{\uT})
             \\
             \implies \redLRConfk{k- k_1 - k_2}{t}{v}{\uT}
 }
 { \redLRConfk{k}{t}{e}{\expT}
 }
\end{mathpar}

  \sTransSection{
   \ruleform{\redLRConfk{k}{t}{v}{\uT}}
  }{FG values versus \TL\ values}
\begin{mathpar}
 \inferrule[red-rel-struct]
     { \TYPE\ t_S \ \STRUCT\ \{ \foreach{f \ t}{n} \} \in  \foreachN{D}
       \\ \forall i \in [n] . \redLRConfk{k}{t_i}{v_i}{\uT_i}
     }
     { \redLRConfk{k}
       {t_S}
       {t_S \{ \foreach{v}{n} \}}
       {\kT_{t_S} \ (\foreach{\uT}{n})}
     }

 \inferrule[red-rel-iface]
    {\uT = \kT_{u_S} \ \foreachN{\uT'}
      \\ \forall k_1 < k . \redLRConfk{k_1}{u_S}{v}{\uT}
      \\ \methodSpecifications{\foreachN{D}}{t_I} = \{ \foreach{m M}{n} \}
      \\ \forall k_2 < k, i \in [n] . \redLRConfk{k_2}
                       {m_i M_i}
                       {\methodLookup{\foreachN{D}}{(m_i,u_S)}}
                       {\uT_i}
     }
     { \redLRConfk{k}{t_I}{v}
             {\kT_{t_I} \ (\uT, \foreach{\uT}{n})}
     }

 \end{mathpar}

  \sTransSection{
    \ruleform{\redLRConfk{k}
                    {m M}
                    {\FUNC\ (x \ t_S) \ mM \ \{ \RETURN\ e \}}{\uT}}
  }{FG methods versus \TL\ methods}
 \begin{mathpar}
 \inferrule[red-rel-method]
           { \forall k' \leq k,
                v', \uT', \foreach{v_i}{n}, \foreach{\uT_i}{n}.
                    (\redLRConfk{k'}{t_S}{v'}{\uT'}
                    \wedge (\forall i \in [n]. \redLRConfk{k'}{t_i}{v_i}{\uT_i}))
         \\ \implies \redLRConfk{k'}
               {t}
               {\Angle{\subst{x}{v'},\foreach{\subst{x_i}{v_i}}{n}} e}
               {(\uT \ \uT') \ (\foreach{\uT}{n})}
 }
     { \redLRConfk{k}
       {m (\foreach{x \ t}{n}) \ t}
       {\FUNC\ (x \ t_S) \ m (\foreach{x \ t}{n}) \ t \ \{ \RETURN\ e \}}
       {\uT}
 }
\end{mathpar}

  \sTransSection{
    \ruleform{\redLRk{k}{\triple{\foreachN{D}}{\vbMethTL}{\fgEnv}}
                    {}
                    {\vbFG}
                    {\vbTL}}
  }{FG versus \TL\ value bindings}
  \begin{mathpar}
    \inferrule[red-rel-vb]
              { \forall (x : t) \in \fgEnv .
                 \redLRConfk{k}{t}{\vbFG(x)}{\vbTL(X)}
              }
              {
                \redLRk{k}{\triple{\foreachN{D}}{\vbMethTL}{\fgEnv}}
                    {}
                    {\vbFG}
                    {\vbTL}
             }
  \end{mathpar}

  \sTransSection{
    \ruleform{\redLRk{k}
                    {}
                    {}
                    {\foreachN{D}}
                    {\vbMethTL}}
  }{FG declarations versus \TL\ method bindings}
 \begin{mathpar}
   \inferrule[red-rel-decls]
             { \forall\,
               \FUNC\ (x \ t_S) \ m M \ \{ \RETURN\ e \} \in \foreachN{D} :\\
              \redLRConfk{k}
                    {m M}
                    {\FUNC\ (x \ t_S) \ mM \ \{ \RETURN\ e \}}{\mT{m}{t_S}}
             }
             { \redLRk{k}
                    {}
                    {}
                    {\foreachN{D}}
                    {\vbMethTL}
             }

 \end{mathpar}

}

\newcommand{\ttt}[1]{\mbox{\tt #1}}
\newcommand{\LR}[4]{ #3 \approx #4 \in \Sem{#2}_{#1} }
\newcommand{\LRt}[4]{ \ttt{#3} \approx \ttt{#4} \in \Sem{#2}_{#1} }

We wish to show that the dictionary-passing translation preserves the dynamic behavior of FG programs.
To establish this property we make use of (binary) logical relations~\cite{Plotkin1973,DBLP:journals/jsyml/Tait67}.
Logical relations express that related terms behave the same.
We say that source and target terms are \emph{equivalent} if they are related under the logical relation.
The goal is to show that FG expressions and target expressions resulting from the dictionary-passing translation
are equivalent.

For example, in FG the run-time value associated with an interface type is a structure
that implements the interface
whereas in our translation each interface translates to an interface-value.
To establish that a structure $t_S\{\foreachN{v}\}$ and an interface-value $K_{t_I} (\uT, \foreachN{\uT})$ are equivalent
w.r.t.~some interface $t_I$
we need to require that
\begin{itemize}
\item (Struct-I-Val-1) $t_S\{\foreachN{v}\}$ and $\uT$ are equivalent w.r.t.~$t_S$, and
\item (Struct-I-Val-2) method definitions for receiver type $t_S$ are equivalent to $\foreachN{\uT}$.
\end{itemize}

Because signatures in method specifications of an interface may refer to the interface itself,
there may be cyclic dependencies that then result in well-foundness issues of the definition of logical relations.
To solve this issue we include a step index~\cite{10.1007/11693024_6}.
We explain this technical point via the example in Figure~\ref{f:fg-equality}.
We will write $\LR{k}{t}{e}{\expT}$ to denote that FG expression $e$ and TL expression $\expT$
are in a logical relation w.r.t.~the FG type $t$, where $k$ is the step index.
Similarly, $\LR{k}{R}{\FUNC\ (x \ t_S) \ R \ \{ \RETURN\ e \}}{\uT}$ expresses
that a FG method declaration and a TL value $\vT$ are in a logical relation w.r.t.~the FG method
specification $R$.

Consider the FG expression \ttt{Int\{1\}} from example in Figure~\ref{f:fg-equality}.
When viewed at type \ttt{Eq},
our translation yields the interface-value \ttt{K$_{\textit{Eq}}$ (K$_{\textit{Int}}$ 1, eq$_{\textit{Int}}$)}.
We need to establish $\LRt{k_1}{\mathtt{Eq}}
    {Int\{1\}}
    {K$_{\textit{Eq}}$ (K$_{\textit{Int}}$ 1, eq$_{\textit{Int}}$)}$.
\bda{ll}\small
(1) & \LRt{k_1}{\mathtt{Eq}}
    {Int\{1\}}
    {K$_{\textit{Eq}}$ (K$_{\textit{Int}}$ 1, eq$_{\textit{Int}}$)}
\\[\smallskipamount]
& {
  \ba{rll}
  \mbox{if} & (2) &
  \LRt{k_2}{\mathtt{Int}}
         {Int\{1\}}
         {K$_{\textit{Int}}$ 1} \ \ \mbox{and}  
  \\[\smallskipamount]
   & (3) &
 \LRt{k_3}{\texttt{eq(y Eq)\,\gokw{bool}}}
          {\gokw{func} (x Int) eq(y Eq)\,\gokw{bool}\,\{return\,$e$\}}
          {eq$_{\textit{Int}}$}
          \\ & & \mbox{where $k_2 < k_1, k_3 < k_1$}
          \\[\smallskipamount]
  & & {
    \ba{ll}
    \mbox{if}~(4) &
    \forall \LRt{k_4}{\texttt{Int}}{$v_1$}{$\uT_1$}, \LRt{k_4}{\texttt{Eq}}{$v_2$}{$\uT_2$}.
    \\&
            \LRt{k_4}{\texttt{\gokw{bool}}}
                {$\Angle{\subst{x}{v_1},\subst{y}{v_2}}e$}
                {eq$_{\textit{Int}}$ $\uT_1$ $\uT_2$}
    \mbox{where $k_4 \leq k_3$}
    \ea
  }
  \ea
}
\eda

Following (Struct-I-Val-1) and (Struct-I-Val-2), (1) holds if we can establish (2) and (3).
(2) is easy to establish.
(3) holds if we can establish (4).
(4) states that for equivalent inputs the respective method definitions are equivalent as well.
Without the step index, establishing $\LRt{}{\texttt{Eq}}{.}{.}$
would reduce to establishing $\LRt{}{\texttt{Eq}}{.}{.}$.
We are in a cycle. With the step index,
$\LR{k_1}{\texttt{Eq}}{.}{.}$
reduces to $\LRt{k_4}{\texttt{Eq}}{.}{.}$ where $k_4 < k_1$.
The step index represents the number of reduction steps we can take
and will be reduced for each reduction step.
Thus, we can give a well-founded definition of our logical relations.

Figure~\ref{f:reduce-rel-fg-tl} gives the step-indexed logical relations
to relate FG and TL terms.
Rule \Rule{red-rel-exp} relates FG and TL expressions.
The expressions are in a relation assuming that the resulting values are in a relation
where we impose a step limit on the number of reduction steps that can be taken.
We additionally find
parameters $\foreachN{D}$ and $\vbMethTL$ as FG and TL expressions refer to method definitions.

Rule \Rule{red-rel-struct} is straightforward.
Rule \Rule{red-rel-iface} has been motivated above.
We make use of the following helper function to lookup up the method definition
for a specific pair of method name and receiver type.
\bda{c}\small
  \myirule{ \FUNC\ (x \ t_S) \ mM \ \{ \RETURN\ e \} \in  \foreachN{D} }
          { \methodLookup{\foreachN{D}}{(m,t_S)} = \FUNC\ (x \ t_S) \ mM \ \{ \RETURN\ e \} }
\eda

Rule \Rule{red-rel-method} covers method definitions.
Rule \Rule{red-rel-vb} ensures that the substitutions from free variables to values are related.
Rule \Rule{red-rel-decls} ensures that our labeling for the translation of method definitions is consistent.

A fundamental property of step-indexed logical relations is that
if two expressions are in a relation for $k$ steps then they are also in a relation
for any smaller number of steps.

\begin{lemma}[Monotonicity]
\label{le:monotonicity}
  Let $\redLRConfk{k}{t}{e}{\expT}$ and $k' \leq k$.
  Then, we find that $\redLRConfk{k'}{t}{e}{\expT}$.
\end{lemma}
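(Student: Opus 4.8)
The plan is to strengthen the statement and prove it simultaneously for all the mutually‑defined relations of Figure~\ref{f:reduce-rel-fg-tl} — the expression relation $\redLRConfk{k}{t}{e}{\expT}$, the value relation $\redLRConfk{k}{t}{v}{\uT}$, the method relation $\redLRConfk{k}{mM}{\FUNC\,(x\ t_S)\,mM\,\{\RETURN\ e\}}{\uT}$, and the two relations on value bindings and declarations (\Rule{red-rel-vb}, \Rule{red-rel-decls}) — by a single lexicographic induction on the pair $(k, \mathrm{size}(t))$, where $\mathrm{size}$ of a structure type is the sum of the sizes of its field types. This measure is well‑founded precisely because condition \FGStructNonRec\ rules out recursive structures; it is essentially the same measure that justifies the definition of the logical relations in the first place, and for the method, value‑binding and declaration relations only the first component matters. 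The proof proceeds by case analysis on the rule that justifies the hypothesis, for a fixed $k' \le k$ (the case $k' = k$ being trivial).

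Most cases are immediate: the rules \Rule{red-rel-iface}, \Rule{red-rel-method} and \Rule{red-rel-exp} quantify each of their recursive obligations over step indices bounded by $k$ (by $k_1, k_2 < k$, by $k' \le k$, and by $k_1, k_2 < k$, respectively), so replacing $k$ by a smaller $k'$ merely shrinks these ranges and the obligations demanded at $k'$ form a subset of those already available at $k$ — no induction hypothesis is needed. Rules \Rule{red-rel-vb} and \Rule{red-rel-decls} follow by applying the induction hypothesis to each component binding, respectively each method declaration. The rule \Rule{red-rel-struct} is the only one in which $k$ stays fixed: the hypothesis unfolds to obligations $\redLRConfk{k}{t_i}{v_i}{\uT_i}$ on the field types, where now $\mathrm{size}(t_i) < \mathrm{size}(t_S)$, so the induction hypothesis applies, and by \FGStructNonRec\ this descent necessarily bottoms out at interface or primitive field types.

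The remaining subtlety is hidden in \Rule{red-rel-exp}: from reductions $\reduceFGk{k_1}{\foreachN{D}}{e}{v}$ and $\reduceTLk{k_2}{\vbMethTL}{\expT}{\uT}$ with $k_1, k_2 < k'$ the hypothesis yields $\redLRConfk{k - k_1 - k_2}{t}{v}{\uT}$, which has to be downgraded to $\redLRConfk{k' - k_1 - k_2}{t}{v}{\uT}$. If $k_1 + k_2 > 0$ the source index $k - k_1 - k_2$ is strictly below $k$, so the induction hypothesis closes the gap directly; if $k_1 = k_2 = 0$, so that $e = v$ and $\expT = \uT$ are already values, we instead invoke the \emph{value} case at step index $k$, which is legitimate because that case only appeals to the hypothesis at a strictly smaller type size. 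The main obstacle is exactly this orchestration — ensuring that the one rule leaving $k$ unchanged always strictly decreases the type measure, and that the back‑and‑forth between the expression, value and method relations never closes a cycle. Once the lexicographic order is pinned down and \FGStructNonRec\ is exploited, the rest is routine unfolding of the definitions in Figure~\ref{f:reduce-rel-fg-tl}.
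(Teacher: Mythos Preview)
Your proof is correct and follows essentially the same approach as the paper: both argue by induction on the well-founded structure underlying the logical relation, with the paper phrasing this as ``induction over the derivation'' and you making the measure $(k,\mathrm{size}(t))$ explicit and invoking \FGStructNonRec\ to justify it. Your treatment is somewhat more careful (you spell out the $k_1=k_2=0$ corner of \Rule{red-rel-exp} and fold in the auxiliary monotonicity statements that the paper proves separately as Lemmas~\ref{le:monotonicity2} and~\ref{le:monotonicity3}), but the underlying argument is the same.
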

\begin{proof}
  By induction over the derivation $\redLRConfk{k}{t}{e}{\expT}$.

  \noindent
  {\bf Case} \Rule{red-rel-exp}:
  \begin{mathpar}
 \inferrule
           {  \forall k_1 < k, k_2 < k, v, \uT.
             (k  - k_1 - k_2 > 0 \wedge
             \reduceFGk{k_1}{\foreachN{D}}{e}{v}
             \wedge
             \reduceTLk{k_2}{\vbMethTL}{\expT}{\uT})
             \\
             \implies \redLRConfk{k- k_1 - k_2}{t}{v}{\uT}
 }
 { \redLRConfk{k}{t}{e}{\expT}
 }
  \end{mathpar}

  If either $e$ or $\expT$ is irreducible, $\redLRConfk{k'}{t}{e}{\expT}$
  holds immediately because the universally quantified statement in the premise holds vacuously.

  Otherwise, we find
  $\reduceFGk{k_1}{\foreachN{D}}{e}{v}$
  and
  $\reduceTLk{k_2}{\vbMethTL}{\expT}{\uT}$
  for some $k_1$ and $k_2$. If $k' - k_1 - k_2 \leq 0$, $\redLRConfk{k'}{t}{e}{\expT}$ holds again immediately.

  Otherwise, by induction applied on the premise of rule \Rule{red-rel-exp}
  we find that $\redLRConfk{k'- k_1 - k_2}{t}{v}{\uT}$ and we are done for this case.

  \noindent
      {\bf Case} \Rule{red-rel-struct}:
\begin{mathpar}
 \inferrule
     { \TYPE\ t_S \ \STRUCT\ \{ \foreach{f \ t}{n} \} \in  \foreachN{D}
       \\ \forall i \in [n]. \redLRConfk{k}{t_i}{v_i}{\uT_i}
     }
     { \redLRConfk{k}
       {t_S}
       {t_S \{ \foreach{v}{n} \}}
       {\kT_{t_S} \ (\foreach{\uT}{n})}
     }
 \end{mathpar}

  Follows immediately by induction.

  \noindent
      {\bf Case} \Rule{red-rel-iface}:
\begin{mathpar}
 \inferrule
    {\uT = \kT_{u_S} \ \foreachN{\uT'}
      \\ (1)~ \forall k_1 < k . \redLRConfk{k_1}{u_S}{v}{\uT}
      \\ \methodSpecifications{\foreachN{D}}{t_I} = \{ \foreach{m M}{n} \}
      \\ (2)~ \forall k_2 < k, i \in [n] . \redLRConfk{k_2}
                       {m_i M_i}
                       {\methodLookup{\foreachN{D}}{(m_i,u_S)}}
                       {\uT_i}
     }
     { \redLRConfk{k}{t_I}{v}
             {\kT_{t_I} \ (\uT, \foreach{\uT}{n})}
     }
\end{mathpar}

  Consider the first premise (1).
  If there exists $k_1 < k'$ then $\redLRConfk{k_1}{u_S}{v}{\uT}$. Otherwise,
  this premise holds vacuously.
  The same argument for $k_2 < k'$ applies to the second premise (2).
  Hence,
  $\redLRConfk{k'}{t_I}{v}{\kT_{t_I} \ (\uT, \foreach{\uT_i}{n})}$.
  \qed
\end{proof}
A similar monotonicity result applies to method definitions and declarations.
Monotonicity is an essential property to obtain the following results.

Interface-value constructors and destructors preserve equivalent expressions via logical relations
as stated by the following results.

\begin{lemma}[Structural Subtyping versus Interface-Value Constructors]
\label{le:upcast-red-equiv}
  Let $\tdUpcast{\foreachN{D}}{\subtypeOf{t}{u}}{\expT_1}$
  and $\redLRk{k}{}{}{\foreachN{D}}{\vbMethTL}$
  and $\redLRConfk{k}{t}{e}{\expT_2}$.
  Then, we find that $\redLRConfk{k}{u}{e}{\expT_1 \ \expT_2}$.
\end{lemma}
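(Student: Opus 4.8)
The plan is to proceed by case analysis on the derivation of $\tdUpcast{\foreachN{D}}{\subtypeOf{t}{u}}{\expT_1}$, which by inspection of Figure~\ref{f:upcast-downcast} must end in either \Rule{td-cons-struct-iface} or \Rule{td-cons-iface-iface}. In both cases $u$ is an interface type $u_I$, and the target term $\expT_1$ is a lambda abstraction, so $\expT_1\ \expT_2$ reduces (once $\expT_2$ has been evaluated to a value $\uT_2$) to an interface-value $\kT_{u_I}\ (\dots)$. The goal $\redLRConfk{k}{u_I}{e}{\expT_1\ \expT_2}$ unfolds via \Rule{red-rel-exp}: I must show that whenever $\reduceFGk{k_1}{\foreachN{D}}{e}{v}$ and $\reduceTLk{k_2}{\vbMethTL}{\expT_1\ \expT_2}{\uT}$ with $k - k_1 - k_2 > 0$, then $\redLRConfk{k - k_1 - k_2}{u_I}{v}{\uT}$. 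From the hypothesis $\redLRConfk{k}{t}{e}{\expT_2}$ and \Rule{red-rel-exp} I extract that $\expT_2$ must reduce to some $\uT_2$ with $v \approx \uT_2$ related at type $t$ with a suitable residual step budget; the remaining steps of $\expT_1\ \expT_2$ are the constant-cost administrative reductions of applying the lambda and building/deconstructing the tuple.

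For the \Rule{td-cons-struct-iface} case, $t = t_S$ is a struct, $\expT_1 = \lambda\xT.\,\kT_{t_I}\ (\xT, \foreach{\mT{m_i}{t_S}}{n})$, and so $\uT = \kT_{u_I}\ (\uT_2, \foreach{\mT{m_i}{t_S}}{n})$. To establish $\redLRConfk{k'}{u_I}{v}{\uT}$ for the residual index $k' = k - k_1 - k_2$, I apply \Rule{red-rel-iface}. Its first premise ($v \approx \uT_2$ at $t_S$ for all smaller indices) follows from the related-value fact extracted above together with Monotonicity (Lemma~\ref{le:monotonicity}); note $v$ is already a struct value $u_S\{\dots\}$, and since $v$ is related to $\uT_2$ at type $t_S$ we get $\uT_2 = \kT_{u_S}\ \foreachN{\uT'}$ and in fact $u_S = t_S$ from \Rule{red-rel-struct}. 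The second premise requires each $\mT{m_i}{t_S}$ to be related at the method specification $m_i M_i$ to $\methodLookup{\foreachN{D}}{(m_i, t_S)}$; this is exactly what $\redLRk{k}{}{}{\foreachN{D}}{\vbMethTL}$ gives us via \Rule{red-rel-decls}, again modulo Monotonicity to drop to the smaller index. The method-specification match between $\methodSpecifications{\foreachN{D}}{t_S} \supseteq \foreach{m M}{n}$ and the interface's specs is guaranteed by the rule's preconditions.

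The \Rule{td-cons-iface-iface} case is analogous but one layer deeper: here $t = t_I$, so the related value $\uT_2$ is itself an interface-value $\kT_{t_I}\ (\uT_0, \foreach{\uT}{n})$, and $\expT_1$ pattern-matches it, extracting the underlying value $\uT_0$ and reassembling the dictionary in the permuted order dictated by $\mapPerm$ to form $\kT_{u_I}\ (\uT_0, \uT_{\mapPerm(1)}, \dots, \uT_{\mapPerm(q)})$. To show this is related to $v$ at $u_I$, I invert the hypothesis $v \approx \uT_2 \in \redLRSemk{k_2'}{t_I}$ via \Rule{red-rel-iface}, which already tells me $\uT_0 = \kT_{u_S}\ \foreachN{\uT'}$, that $v \approx \uT_0$ at $u_S$, and that each $\uT_i$ is related to $\methodLookup{\foreachN{D}}{(m_i, u_S)}$ at $t_I$'s $i$-th spec. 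Since $u_I$'s method specs are a (permuted) subset of $t_I$'s via $\mapPerm$, the permuted sublist $\foreach{\uT_{\mapPerm(i)}}{q}$ is exactly related to the method lookups demanded by $u_I$, and reapplying \Rule{red-rel-iface} finishes the case. I expect the main obstacle to be the careful step-index accounting in \Rule{red-rel-exp}: I must verify that the constant number of administrative TL reduction steps for the lambda-application and tuple construction/destruction can be absorbed so that the residual index $k - k_1 - k_2$ (or the strictly-smaller indices demanded by \Rule{red-rel-iface}'s premises) is still large enough, and that invoking Monotonicity at each premise is legitimate. A secondary subtlety is handling the case where $\expT_2$ (equivalently $e$) is irreducible, where both sides of \Rule{red-rel-exp} hold vacuously.
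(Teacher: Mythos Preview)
Your proposal is correct and follows essentially the same approach as the paper: case analysis on the derivation of $\tdUpcast{\foreachN{D}}{\subtypeOf{t}{u}}{\expT_1}$, inverting \Rule{red-rel-iface} in the interface--interface case, appealing to $\redLRk{k}{}{}{\foreachN{D}}{\vbMethTL}$ via \Rule{red-rel-decls} in the struct--interface case, and using Monotonicity to align step indices throughout. The only organisational difference is that the paper first factors out an auxiliary statement at the \emph{value} level (assuming $\redLRConfk{k}{t}{v}{\uT}$ for values $v,\uT$ and concluding $\redLRConfk{k}{u}{v}{\expT_1\ \uT}$), proves that by the same case analysis you describe, and then lifts to arbitrary expressions by a single outer appeal to \Rule{red-rel-exp}; you instead unfold \Rule{red-rel-exp} on the goal first and carry out the value-level reasoning inline, which amounts to the same work in a slightly different order.
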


\begin{lemma}[Type Assertions versus Interface-Value Destructors]
\label{le:downcast-red-equiv}
  Let $\tdDowncast{\foreachN{D}}{\assertOf{t}{u}}{\expT_1}$
  and $\redLRk{k}{}{}{\foreachN{D}}{\vbMethTL}$
  and $\redLRConfk{k}{t}{e}{\expT_2}$.
  Then, we find that $\redLRConfk{k}{u}{e.(u)}{\expT_1 \ \expT_2}$.
\end{lemma}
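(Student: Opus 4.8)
The plan is to expand the goal $\redLRConfk{k}{u}{e.(u)}{\expT_1\,\expT_2}$ according to rule \Rule{red-rel-exp}: we assume reductions $\reduceFGk{k_1}{\foreachN{D}}{e.(u)}{v}$ and $\reduceTLk{k_2}{\vbMethTL}{\expT_1\,\expT_2}{\uT}$ with $k_1,k_2<k$ and $k-k_1-k_2>0$, and we must produce $\redLRConfk{k-k_1-k_2}{u}{v}{\uT}$. If no such pair of reductions exists the goal holds vacuously, so we may work entirely in the reducing case. Inspecting the two rules that can derive $\tdDowncast{\foreachN{D}}{\assertOf{t}{u}}{\expT_1}$, namely \Rule{td-destr-iface-struct} and \Rule{td-destr-iface-iface}, we see that $t$ must be an interface $t_I$ and that $\expT_1$ is a $\lambda$-abstraction whose body first pattern-matches on $\kT_{t_I}$.

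First I would decompose both reductions. On the FG side, the only way $e.(u)$ reaches a value is to reduce $e$ to a value $v'=t_S\{\foreachN{v''}\}$ under \Rule{fg-context} and then fire \Rule{fg-assert}; hence $\reduceFGk{k_1-1}{\foreachN{D}}{e}{v'}$, $v=v'$, and (this is what makes everything work) $\fgSub{\foreachN{D}}{t_S}{u}$. On the TL side, $\expT_1\,\expT_2$ first reduces $\expT_2$ to a value $\uT_2$ in some number $b$ of steps and then performs a bounded number $c$ of administrative steps: the $\beta$-reduction, the (possibly nested) \CASE, and, in the interface case, the body of the matching clause; so $\reduceTLk{b}{\vbMethTL}{\expT_2}{\uT_2}$ with $b+c\le k_2$. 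Feeding $v'$ and $\uT_2$ into the hypothesis $\redLRConfk{k}{t_I}{e}{\expT_2}$ via \Rule{red-rel-exp} (the side conditions $k_1-1<k$, $b<k$ and $k-(k_1-1)-b>0$ being immediate) yields $\redLRConfk{m}{t_I}{v}{\uT_2}$ for $m:=k-(k_1-1)-b$, and $b+c\le k_2$ gives the slack $m\ge(k-k_1-k_2)+c+1$. Being a relation between a struct value and a \TL\ value at an interface type, it must have been derived by \Rule{red-rel-iface}, so inversion yields $\uT_2=\kT_{t_I}(\uT_3,\foreach{\uT_4}{n})$ where $\uT_3$ is a $t_S$-tagged value, say $\uT_3=\kT_{t_S}(\foreachN{\uT'})$, with $\redLRConfk{k_a}{t_S}{v}{\uT_3}$ for all $k_a<m$.

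Next I would split on $u$. If $u$ is a struct $t'_S$, the body of $\expT_1$ from \Rule{td-destr-iface-struct} tries to match $\uT_2$ against $\kT_{t_I}(\kT_{t'_S}\,\yT,\foreach{\xT}{n})$; this succeeds only if $t'_S=t_S$, which is precisely what \Rule{fg-assert} (via \Rule{sub-struct-refl}) forced on the FG side, and then $\expT_1\,\expT_2$ reduces to $\kT_{t_S}(\foreachN{\uT'})=\uT_3$, i.e.\ $\uT=\uT_3$. Since $k-k_1-k_2<m$, the instance $\redLRConfk{k-k_1-k_2}{t_S}{v}{\uT_3}$ of the relation above is exactly the goal $\redLRConfk{k-k_1-k_2}{u}{v}{\uT}$. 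If $u$ is an interface $u_I$, then $t_S$, being a declared struct with $\fgSub{\foreachN{D}}{t_S}{u_I}$, is one of the receiver types $t_{Sj}$ for which \Rule{td-destr-iface-iface} emits a clause $\clsT_j=\kT_{t_{Sj}}\,\yT'\rightarrow\expT_j\,(\kT_{t_{Sj}}\,\yT')$ with $\tdUpcast{\foreachN{D}}{\subtypeOf{t_{Sj}}{u_I}}{\expT_j}$; hence the destructor reduces $\expT_1\,\expT_2$ to $\expT_{j_0}\,\uT_3$ (where $t_S=t_{Sj_0}$), which in turn reduces to $\uT$. Here I would invoke Lemma~\ref{le:upcast-red-equiv} with the upcast $\tdUpcast{\foreachN{D}}{\subtypeOf{t_S}{u_I}}{\expT_{j_0}}$, the hypothesis $\redLRk{k}{}{}{\foreachN{D}}{\vbMethTL}$ (lowered to the needed index by monotonicity), and the relation $\redLRConfk{p}{t_S}{v}{\uT_3}$ (available for every $p<m$), obtaining $\redLRConfk{p}{u_I}{v}{\expT_{j_0}\,\uT_3}$; since $v$ is already a value and $\expT_{j_0}\,\uT_3$ reaches $\uT$ in a fixed small number of steps, one more use of \Rule{red-rel-exp} (together with Lemma~\ref{le:monotonicity}) yields $\redLRConfk{k-k_1-k_2}{u_I}{v}{\uT}$ once $p$ is picked with $(k-k_1-k_2)<p<m$ --- which is exactly what the slack $m\ge(k-k_1-k_2)+c+1$ provides.

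The main obstacle is the step-index bookkeeping. One has to choose the indices handed to \Rule{red-rel-exp} (twice) and to Lemma~\ref{le:upcast-red-equiv} so that every side condition of the form $k_i<k$ or $k-k_1-k_2>0$ holds, so that the value relation extracted from \Rule{red-rel-iface} is still available at the index Lemma~\ref{le:upcast-red-equiv} demands, and so that enough of the budget survives the constant number of administrative reduction steps of the destructor (and, in the interface case, of the re-wrapping upcast). Making this precise requires being explicit about exactly how many TL steps the $\beta$-reduction plus the possibly nested \CASE\ consume; once that count is fixed, everything else is routine inversion on the logical relations and on the FG and TL reduction rules, together with the monotonicity lemma.
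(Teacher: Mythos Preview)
Your proposal is correct and follows essentially the same approach as the paper's proof: case analysis on the two destructor rules, inversion of \Rule{red-rel-iface} on the related interface value, and in the interface-to-interface case an appeal to Lemma~\ref{le:upcast-red-equiv}. The differences are purely organisational. The paper first inverts the hypothesis $\redLRConfk{k}{t_I}{e}{\expT_2}$ and then performs an explicit subcase split on whether the underlying struct tag matches (the non-matching subcase being dispatched by observing that both $e.(u)$ and $\expT_1\,\expT_2$ never reach a value); you instead assume up front that both sides reach a value and read off the tag match from the successful \Rule{fg-assert} step, which absorbs the stuck subcase into the vacuous branch of \Rule{red-rel-exp}. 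In the interface case the paper reaches the conclusion at index $k$ via an auxiliary lemma (Lemma~\ref{le:k-steps}) that re-attaches reduction prefixes, whereas you stay at the value level and directly establish the relation at index $k-k_1-k_2$. Both routes rely on the same slack argument; your stated constraint $(k-k_1-k_2)<p<m$ is slightly weaker than what you actually need (you want $p$ large enough to also absorb the upcast's own reduction step(s), so taking $p=m-1$ is the clean choice), but as you yourself note, once the constant number of administrative TL steps is pinned down the remaining bookkeeping is routine.
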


Based on the above we can show that target expressions resulting from FG expressions
and target methods resulting from FG methods are equivalent.

  \begin{lemma}[Expression Equivalence]
\label{le:exp-red-equivalent}
  Let $\tdExpTrans{\pair{\foreachN{D}}{\fgEnv}}{e : t}{\expT}$
  and $\vbFG$, $\vbTL$, $\vbMethTL$
  such that
  $\redLRk{k}{\triple{\foreachN{D}}{\vbMethTL}{\fgEnv}}{}{\vbFG}{\vbTL}$
  and
  $\redLRk{k}{}{}{\foreachN{D}}{\vbMethTL}$ for some $k$.
  Then, we find that $\redLRConfk{k}{t}{\vbFG(e)}{\vbTL(\expT)}$.
\end{lemma}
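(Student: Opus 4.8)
The plan is to argue by structural induction on the derivation of $\tdExpTrans{\pair{\foreachN{D}}{\fgEnv}}{e : t}{\expT}$, with a case analysis on the last rule applied. In every case the obligation $\redLRConfk{k}{t}{\vbFG(e)}{\vbTL(\expT)}$ is discharged by unfolding \Rule{red-rel-exp} (or, when $\vbFG(e)$ is itself a value, by inspecting the relevant value rule directly): we fix $k_1, k_2 < k$ and values $v, \uT$ with $k - k_1 - k_2 > 0$, $\reduceFGk{k_1}{\foreachN{D}}{\vbFG(e)}{v}$ and $\reduceTLk{k_2}{\vbMethTL}{\vbTL(\expT)}{\uT}$, and we must exhibit the value relation $\redLRConfk{k-k_1-k_2}{t}{v}{\uT}$ (if either side is irreducible, the premise of \Rule{red-rel-exp} is vacuous and nothing is left to show). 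Two ingredients are used throughout: a \emph{context-decomposition} fact for the deterministic, context-based reductions of both languages --- if $\EvCtx[d]$ (resp.\ $\REvCtxT[\expT]$) reaches a value in $m$ steps then $d$ (resp.\ $\expT$) reaches a value in some $m' \le m$ steps and the surrounding context is resumed with the remaining $m-m'$ steps --- which lets me partition the consumed step counts among subexpressions; and Lemma~\ref{le:monotonicity}, used to coerce a relation obtained at a larger index down to the index actually needed.

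The non-call cases are routine. For \Rule{td-var}, $\vbFG(e)$ and $\vbTL(\expT)$ are already values and the conclusion is precisely an instance of the hypothesis $\redLRk{k}{\triple{\foreachN{D}}{\vbMethTL}{\fgEnv}}{}{\vbFG}{\vbTL}$ read through \Rule{red-rel-vb}. For \Rule{td-struct}, the FG and TL reductions are decomposed into reductions of the field subexpressions; the induction hypothesis is applied to each $e_i$ at $t_i$ (its step costs bounded by $k_1,k_2$, hence $< k$, with residual budget at least $k-k_1-k_2$), realigned by monotonicity, and reassembled with \Rule{red-rel-struct}. For \Rule{td-access}, the induction hypothesis on the receiver at $t_S$ gives a \Rule{red-rel-struct} value relation whose inversion yields the field relations, from which the $i$-th is selected. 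For \Rule{td-sub} and \Rule{td-assert}, the induction hypothesis on the premise (a strictly smaller derivation) followed by a direct appeal to Lemma~\ref{le:upcast-red-equiv}, resp.\ Lemma~\ref{le:downcast-red-equiv}, closes the case --- no further step-index bookkeeping is needed, since those lemmas already package it.

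The two method-call rules are the heart of the argument. For \Rule{td-call-struct}: the FG receiver reduces to a struct value $t_S\{\ldots\}$, and the hypothesis $\redLRk{k}{}{}{\foreachN{D}}{\vbMethTL}$ supplies, via \Rule{red-rel-decls}, that $\mT{m}{t_S}$ is related to the declaration of $m$ for $t_S$ at that method's specification; inverting \Rule{red-rel-method} then yields that, on receiver and argument values related at any $k' \le k$, the substituted body $\Angle{\subst{x}{v'},\foreach{\subst{x_i}{v_i}}{n}} e$ is related as an expression to $(\mT{m}{t_S}\ \uT')\ (\foreach{\uT}{n})$, which is exactly what the TL call reduces to once its receiver and arguments are evaluated. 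It then remains to (i) feed in the induction hypotheses for the receiver and the argument subexpressions, coerced by monotonicity to a common $k'$ chosen just large enough; (ii) charge the constant number of \Rule{fg-call} steps on the FG side and \Rule{tl-method}, \Rule{tl-lambda}, \Rule{tl-case} steps on the TL side that realise the substitution; and (iii) check that the arithmetic balances, i.e.\ that the budget left for the body relation is exactly $k-k_1-k_2$.

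Rule \Rule{td-call-iface} follows the same shape with one more level of indirection: the receiver's value relation is now \Rule{red-rel-iface}, so inverting it produces both that the wrapped value is related at its concrete struct type and that the $j$-th dictionary entry is related to $\methodLookup{\foreachN{D}}{(m,u_S)}$ --- crucially at an index \emph{strictly} below the budget of the interface relation. That strict decrease is the slack the cycle needs, because a method's signature may mention $t_I$ itself and so its arguments may again be interface-values whose relations must be threaded through at the reduced index; from there the case proceeds just like \Rule{td-call-struct}. Getting this step arithmetic to close in the two call rules --- in particular ensuring that every invocation of \Rule{red-rel-method} or of an inverted value relation happens at an index it is actually known to hold for, and that no residual budget goes non-positive before the final value relation is reached --- is the main obstacle of the proof; everything else is bookkeeping supported by monotonicity and context decomposition.
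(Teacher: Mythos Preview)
Your proposal is correct and follows essentially the same approach as the paper: structural induction on the translation derivation, with the non-call cases handled by decomposition plus monotonicity (and Lemmas~\ref{le:upcast-red-equiv}/\ref{le:downcast-red-equiv} for \Rule{td-sub}/\Rule{td-assert}), and the two call cases driven by inverting \Rule{red-rel-decls}/\Rule{red-rel-method} (resp.\ \Rule{red-rel-iface}) to obtain the body relation at a suitably decreased index. Your explicit naming of the context-decomposition principle and of the strict index drop coming from \Rule{red-rel-iface} makes the step-index arithmetic more transparent than in the paper's own proof, but the argument is the same.
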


  \begin{lemma}[Method Equivalence]
    \label{le:method-red-rel-equiv}
 \mbox{} \\ 
  Let $\tdMethTrans{\foreachN{D}}
                        {\FUNC\ (x \ t_S) \ m (\foreach{x \ t}{n}) \ t \ \{ \RETURN\ e \}}
                        {\lambda \xT . \lambda (\foreach{\xT }{n}). \expT}$.
 Then, we find that $\redLRk{k}{}{}{\foreachN{D}}{\vbMethTL}$ where $\vbMethTL(\mT{m}{t_S}) = \lambda \xT . \lambda (\foreach{\xT }{n}). \expT$ for any $k$.
  \end{lemma}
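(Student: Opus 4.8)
The plan is to prove the (slightly strengthened, more precise) statement that $\redLRk{k}{}{}{\foreachN{D}}{\vbMethTL}$ holds for \emph{every} $k \geq 0$ by strong induction on $k$, where $\vbMethTL$ is the method substitution obtained by applying rule \Rule{td-method} to each $\FUNC$ declaration of $\foreachN{D}$ (as collected in rule \Rule{td-prog}), so that $\vbMethTL(\mT{m}{t_S}) = \lambda\xT.\lambda(\foreach{\xT}{n}).\expT$ with $\tdExpTrans{\pair{\foreachN{D}}{\{x : t_S, \foreach{x_i : t_i}{n}\}}}{e : t}{\expT}$. Unfolding rule \Rule{red-rel-decls}, it suffices to fix one such declaration $D' = \FUNC\ (x\ t_S)\ m(\foreach{x\ t}{n})\ t\ \{\RETURN\ e\}$ and derive $\redLRConfk{k}{m(\foreach{x\ t}{n})\ t}{D'}{\mT{m}{t_S}}$ through rule \Rule{red-rel-method}. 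For $k = 0$ every obligation holds vacuously, since rule \Rule{red-rel-exp} then admits no $k_1, k_2$; so assume $k > 0$ and that the claim holds for every $j < k$.

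Following \Rule{red-rel-method}, fix $k' \leq k$ and related arguments $\redLRConfk{k'}{t_S}{v'}{\uT'}$ and $\redLRConfk{k'}{t_i}{v_i}{\uT_i}$ for $i \in [n]$, and set $\vbFG = \Angle{\subst{x}{v'}, \foreach{\subst{x_i}{v_i}}{n}}$ and $\vbTL = \Angle{\subst{\xT}{\uT'}, \foreach{\subst{\xT_i}{\uT_i}}{n}}$. We must show $\redLRConfk{k'}{t}{\vbFG(e)}{(\mT{m}{t_S}\ \uT')\ (\foreach{\uT}{n})}$. The crucial observation is that the target side is deterministic and takes a \emph{fixed} number $c \geq 1$ of steps to reach exactly $\vbTL(\expT)$ --- one \Rule{tl-method} step, one \Rule{tl-lambda} step, and the \Rule{tl-lambda}/\Rule{tl-case} steps that unfold $\lambda(\foreach{\xT}{n}).{-}$ applied to the tuple $(\foreach{\uT}{n})$ --- and none of the intermediate terms is a value. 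Consequently, if $k' \leq c$ then any admissible $k_2 < k'$ is $< c$, so the target expression cannot reach a value within $k_2$ steps, the premise of \Rule{red-rel-exp} is vacuous, and we are done.

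Otherwise put $k^* = k' - c$, so $0 \leq k^* < k$. By the induction hypothesis $\redLRk{k^*}{}{}{\foreachN{D}}{\vbMethTL}$, and by Monotonicity (Lemma~\ref{le:monotonicity}) the argument relations hold at index $k^*$ as well, so rule \Rule{red-rel-vb} gives $\redLRk{k^*}{\triple{\foreachN{D}}{\vbMethTL}{\{x : t_S, \foreach{x_i : t_i}{n}\}}}{}{\vbFG}{\vbTL}$. Applying Lemma~\ref{le:exp-red-equivalent} to the translation $\tdExpTrans{\pair{\foreachN{D}}{\{x : t_S, \foreach{x_i : t_i}{n}\}}}{e : t}{\expT}$ yields $\redLRConfk{k^*}{t}{\vbFG(e)}{\vbTL(\expT)}$. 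Finally we check $\redLRConfk{k'}{t}{\vbFG(e)}{(\mT{m}{t_S}\ \uT')\ (\foreach{\uT}{n})}$ directly from \Rule{red-rel-exp}: suppose $\vbFG(e)$ reduces to some $v$ within $k_1 < k'$ steps and $(\mT{m}{t_S}\ \uT')\ (\foreach{\uT}{n})$ reduces to some $\uT''$ within $k_2 < k'$ steps with $k' - k_1 - k_2 > 0$; then $k_2 \geq c$ and $\vbTL(\expT)$ reduces to $\uT''$ within $k_2 - c$ steps, while $k_1 < k' - k_2 \leq k^*$, $k_2 - c < k^*$, and $k^* - k_1 - (k_2 - c) = k' - k_1 - k_2 > 0$; hence the unfolded form of $\redLRConfk{k^*}{t}{\vbFG(e)}{\vbTL(\expT)}$ delivers $\redLRConfk{k' - k_1 - k_2}{t}{v}{\uT''}$, as required. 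This completes \Rule{red-rel-method}, and hence \Rule{red-rel-decls}.

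The main obstacle is the apparent circularity: establishing $\redLRk{k}{}{}{\foreachN{D}}{\vbMethTL}$ forces, via \Rule{red-rel-method} and Lemma~\ref{le:exp-red-equivalent}, another appeal to the declarations relation for $\foreachN{D}$ (the body of a translated method may call the methods of $\foreachN{D}$ itself). Note there is no circular dependency \emph{between} the lemmas, since Lemma~\ref{le:exp-red-equivalent} takes the declarations relation as a hypothesis rather than proving it. The circularity lives entirely inside the present induction and is broken precisely because evaluating a translated method call burns $c \geq 1$ target reduction steps before the method body is exposed; this lets us invoke Lemma~\ref{le:exp-red-equivalent} at the strictly smaller index $k^* = k' - c < k$, which is exactly what makes the induction on $k$ well founded.
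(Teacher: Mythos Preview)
Your proposal is correct and follows essentially the same approach as the paper: induction on the step index, using the fixed number of administrative target steps (the paper takes $c=3$) to break the apparent circularity, and invoking Lemma~\ref{le:exp-red-equivalent} with the declarations relation supplied by the induction hypothesis. The only cosmetic difference is in the index bookkeeping: you apply Lemma~\ref{le:exp-red-equivalent} at the carefully chosen $k^* = k' - c$ so that the arithmetic lands exactly on $k' - k_1 - k_2$, whereas the paper applies it at the predecessor index $k$ and then uses a final appeal to Monotonicity to drop from $k - k_1 - k_2'$ to $k+1 - k_1 - k_2$.
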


  The lengthy proofs of the above results are given in the appendix.

  From Lemmas~\ref{le:exp-red-equivalent} and~\ref{le:method-red-rel-equiv}
  we can derive our main result that the dictionary-passing translation preserves the dynamic behavior of FG programs.

\begin{theorem}[Program Equivalence]
\label{theo:prog-equiv}
Let $\tdProgTrans{\ignore{\PACKAGE\ \MAIN;} \foreachN{D} \ \FUNC\ \MAIN () \{ \_ = e \}}
{\LET\ \foreach{\mT{m_i}{{t_S}_i} = \expT_i}{n} \ \IN\ \expT}$
where we assume that $e$ has type $t$.
Then, we find that
$\redLRConfk{k}{t}{e}{\expT}$
for any $k$
where $\vbMethTL = \Angle{\foreach{\subst{\mT{m_i}{{t_S}_i}}{\expT_i}}{n}}$.
\end{theorem}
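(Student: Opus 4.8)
The plan is to read off the relevant translation judgments from rule \Rule{td-prog} and then feed them to Lemmas~\ref{le:exp-red-equivalent} and~\ref{le:method-red-rel-equiv}, which do all the heavy lifting. The translation $\tdProgTrans{\foreachN{D}\ \FUNC\ \MAIN()\{\_=e\}}{\LET\ \foreach{\mT{m_i}{{t_S}_i}=\expT_i}{n}\ \IN\ \expT}$ can only be derived by rule \Rule{td-prog}, so inverting it yields (i) an expression translation $\tdExpTrans{\pair{\foreachN{D}}{\EmptyFgEnv}}{e:t}{\expT}$ under the empty environment, and (ii) for each $\FUNC$-declaration $D_i' = \FUNC\ (x_i\ {t_S}_i)\ m_i M_i\ \{\RETURN\ e_i\}$ in $\foreachN{D}$ a method translation $\tdMethTrans{\foreachN{D}}{D_i'}{\expT_i}$. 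Inverting each method translation (rule \Rule{td-method}) shows that each $\expT_i$ has the shape $\lambda\xT_i.\lambda(\foreachN{\xT}).\expT_i'$, so the $\vbMethTL = \Angle{\foreach{\subst{\mT{m_i}{{t_S}_i}}{\expT_i}}{n}}$ fixed in the statement is exactly the substitution mapping each generated method name to its translation.

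Next I would establish the two hypotheses that Lemma~\ref{le:exp-red-equivalent} needs when instantiated with $\fgEnv = \EmptyFgEnv$. For the declarations relation, applying Lemma~\ref{le:method-red-rel-equiv} to each method translation from (ii) gives $\redLRk{k}{}{}{\foreachN{D}}{\vbMethTL}$ for any $k$, precisely because $\vbMethTL(\mT{m_i}{{t_S}_i}) = \expT_i$ for every $i$ (here one also uses Lemma~\ref{le:monotonicity} implicitly, as it underpins the method-equivalence result). For the value-binding relation, choose $\vbFG$ and $\vbTL$ to be the empty substitutions $\Angle{}$; then rule \Rule{red-rel-vb} is satisfied vacuously, since $\EmptyFgEnv$ binds no variable, so $\redLRk{k}{\triple{\foreachN{D}}{\vbMethTL}{\EmptyFgEnv}}{}{\vbFG}{\vbTL}$ holds for any $k$.

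Finally, applying Lemma~\ref{le:exp-red-equivalent} to the expression translation from (i) together with these two facts yields $\redLRConfk{k}{t}{\vbFG(e)}{\vbTL(\expT)}$ for any $k$. Since $\vbFG$ and $\vbTL$ are empty, $\vbFG(e) = e$ and $\vbTL(\expT) = \expT$, so this is exactly the desired $\redLRConfk{k}{t}{e}{\expT}$.

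I do not expect a genuine obstacle at this level: all the difficulty lives inside Lemmas~\ref{le:exp-red-equivalent}, \ref{le:method-red-rel-equiv}, and \ref{le:monotonicity}. The only points that need care here are bookkeeping ones — checking that the $\vbMethTL$ named in the theorem coincides with the method substitution produced by rule \Rule{td-prog}, so that Lemma~\ref{le:method-red-rel-equiv} applies without any massaging, and observing that the empty local environment makes the value-binding premise of Lemma~\ref{le:exp-red-equivalent} trivial, so that no actual substitution reasoning is required.
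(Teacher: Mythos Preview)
Your proposal is correct and follows exactly the approach the paper takes: the paper's proof is the single line ``Follows from Lemmas~\ref{le:exp-red-equivalent} and~\ref{le:method-red-rel-equiv},'' and you have simply unpacked the bookkeeping (inverting \Rule{td-prog}, instantiating the value-binding relation trivially via the empty environment, and feeding the resulting premises to the two lemmas) that this one-liner leaves implicit.
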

\begin{proof}
  Follows from Lemmas~\ref{le:exp-red-equivalent} and~\ref{le:method-red-rel-equiv}.
  \qed
\end{proof}

Our main result also implies that our translation is coherent.
Recall that the translation rules are non-syntax directed because of rule  \Rule{td-sub}.
Hence, we could for example insert an (albeit trivial) interface-value constructor resulting from
$\tdUpcast{\foreachN{D}}{\subtypeOf{t_I}{t_I}}{\expT}$.
Hence, there might be different target terms for the same source term.
Our main result guarantees that all targets obtained
preserve the meaning of the original program.

\section{Related Work and Conclusion}
\label{sec:related-conc}

The dictionary-passing translation is well-studied in the context of
Haskell type classes~\cite{Wadler:1989:MAP:75277.75283}. A type class
constraint translates to an extra function parameter, constraint resolution
provides a dictionary with the methods of the type class for this parameter.
In our translation from Featherweight Go~\cite{FeatherweightGo}, dictionaries are not
supplied as separate parameters because FG does
not support parametric polymorphism. Instead, a dictionary is always passed as part
of an interface-value, which combines the dictionary with the concrete
value implementing the interface. Thus, interface-values can be viewed as
representations of
existential types~\cite{DBLP:journals/toplas/MitchellP88,laeufer_1996,ThiemannWehr2008}.
How to adapt our dictionary-passing translation scheme to FG extended with parametric polymorphism (generics)
is something we plan to consider in future work.

In the context of type classes it is common to show
that resulting target programs are well-typed.
For example, see the work by Hall and coworkers~\cite{Hall:1996:TCH:227699.227700}.
Typed target terms in this setting require System F
and richer variants depending on the kind of type class extensions
that are considered~\cite{10.1145/1190315.1190324}.
Our target terms are untyped and
we pattern match over constructors to check for ``run-time types''.
For example, see rule \Rule{td-destr-iface-struct} in Figure~\ref{f:upcast-downcast}.
There are various ways to support dynamic typing in a typed setting.
For example, we could employ GADTs as described by
Peyton Jones and coworkers~\cite{DBLP:conf/birthday/JonesWEV16}.
A simply-typed first order functional language with GADTs appears then to be sufficient
as a typed target language for Featherweight Go.
This will require certain adjustments to our dictionary-passing translation.
We plan to study the details in future work.

Another important property in the type class context is coherence.
Bottu and coworkers~\cite{10.1145/3341695} make use of logical relations
to state equivalence among distinct target terms resulting
from the same source type class program.
Thanks to our main result Theorem~\ref{theo:prog-equiv}, we get coherence for free.
We believe it is worthwhile to establish a property similar to Theorem~\ref{theo:prog-equiv} for type classes.
We could employ a simple denotational semantics
for source type class programs
such as~\cite{DBLP:conf/lfp/Thatte94,10.1145/2633357.2633364}
which is then related to target programs obtained via the dictionary-passing translation.
This is something that has not been studied so far
and another topic for future work.

Method dictionaries bear some resemblance to virtual method tables
(vtables) used to implement virtual method dispatch in
object-oriented languages~\cite{DBLP:conf/oopsla/DriesenH96}.
The main difference between vtables and
dictionaries is that there is a fixed connection between an object and its
vtable (via the class of the object), whereas the connection between a
value and a dictionary may change at runtime, depending on the type
the value is used at. Dictionaries allow access to a method at a fixed
offset, whereas vtables in the presence of multiple inheritance require
a more sophisticated lookup algorithm~\cite{DBLP:conf/oopsla/AlpernCFGL01}.

Subtyping for interfaces in Go is based purely on width subtyping,
there is no support for depth subtyping~\cite{TAPL}: a subtype
might provide more methods than the super-interface, but method signatures
must match invariantly. Method dispatch in Go is performed only
on the receiver of the method call. Multi-dispatch~\cite{CLOS,DBLP:conf/ecoop/Chambers92}
refers to the ability to dispatch on multiple arguments, but this approach
turns out to be difficult in combination with structural subtyping~\cite{DBLP:conf/ecoop/MalayeriA08}.

To summarize the results of the paper at hand: we defined a dictionary-passing
translation from Featherweight Go to a untyped lambda calculus with pattern matching.
The compiler for the full Go language~\cite{golang} employs a similar
dictionary-passing approach. We proved that the translation preserves
the dynamic semantics of Featherweight Go, using step-indexed logical
relations.



\bibliography{main}


\appendix

\section{Proofs for Properties Stated in the Main Text}

\subsection{Monotonicity for Method Definitions and Declarations}

\begin{lemma}[Monotonicity 2]
\label{le:monotonicity2}
Let $\redLRConfk{k}{m M}{\FUNC\ (x \ t_S) \ mM \ \{ \RETURN\ e \}}{\uT}$ and $k' \leq k$.
Then, we find that $\redLRConfk{k'}{m M}{\FUNC\ (x \ t_S) \ mM \ \{ \RETURN\ e \}}{\uT}$.
\end{lemma}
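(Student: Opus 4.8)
The plan is to argue directly from the single inference rule \Rule{red-rel-method} that defines the method-level logical relation, mirroring the structure of the proof of Lemma~\ref{le:monotonicity} but without the irreducibility case analysis. First I would observe that the hypothesis $\redLRConfk{k}{m M}{\FUNC\ (x \ t_S) \ mM \ \{ \RETURN\ e \}}{\uT}$ can only have been obtained by \Rule{red-rel-method} (there is no other rule concluding a method-level relation), so by inversion we know its premise: for every $k'' \le k$ and all $v', \uT', \foreach{v_i}{n}, \foreach{\uT_i}{n}$ with $\redLRConfk{k''}{t_S}{v'}{\uT'}$ and $\redLRConfk{k''}{t_i}{v_i}{\uT_i}$ for each $i \in [n]$, we have $\redLRConfk{k''}{t}{\Angle{\subst{x}{v'},\foreach{\subst{x_i}{v_i}}{n}} e}{(\uT \ \uT') \ (\foreach{\uT}{n})}$.

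Next, since $k' \le k$, the index range $\{\, k'' \mid k'' \le k' \,\}$ over which \Rule{red-rel-method} quantifies for the target judgement is a subset of the range $\{\, k'' \mid k'' \le k \,\}$ for which we already have the premise. Hence every instance of the premise needed to apply \Rule{red-rel-method} at bound $k'$ is literally one of the instances supplied by the hypothesis at bound $k$; no new obligations arise. Re-applying \Rule{red-rel-method} with $k'$ in place of $k$ then yields $\redLRConfk{k'}{m M}{\FUNC\ (x \ t_S) \ mM \ \{ \RETURN\ e \}}{\uT}$, which is what we want.

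I do not expect any real obstacle here. Unlike \Rule{red-rel-exp}, whose premise quantifies over \emph{strictly} smaller indices and therefore forces the vacuity/irreducibility reasoning seen in Lemma~\ref{le:monotonicity}, rule \Rule{red-rel-method} quantifies over indices $\le k$, so monotonicity drops out of plain subset reasoning on the quantifier range and requires no induction. The only point worth stating explicitly is that we are \emph{not} invoking Lemma~\ref{le:monotonicity} to adjust the expression-level instances: those instances are passed through unchanged, because each is already an instance of the original hypothesis.
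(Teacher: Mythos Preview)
Your proposal is correct and matches the paper's own argument, which simply states that the result ``follows immediately by observing the premise of rule \Rule{red-rel-method}.'' You have spelled out exactly the subset-of-quantifier-range reasoning that justifies this one-line claim.
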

\begin{proof}
  Follows immediately by observing the premise of rule \Rule{red-rel-method}.
  \qed
\end{proof}

\begin{lemma}[Monotonicity 3]
\label{le:monotonicity3}
Let $\redLRk{k}{}{}{\foreachN{D}}{\vbMethTL}$ and $k' \leq k$.
  Then, we find that $\redLRk{k'}{}{}{\foreachN{D}}{\vbMethTL}$.
\end{lemma}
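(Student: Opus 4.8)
The plan is to unfold the value-binding relation for declarations, which is defined by the single rule \Rule{red-rel-decls}, reduce to the per-declaration obligation, apply Lemma~\ref{le:monotonicity2} (Monotonicity~2) once for each declaration, and then re-assemble. No step-index juggling beyond transporting each per-method relation from $k$ down to $k'$ is needed, so this is a short corollary rather than a genuine induction.

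Concretely, I would first invert rule \Rule{red-rel-decls} on the hypothesis $\redLRk{k}{}{}{\foreachN{D}}{\vbMethTL}$. Since this is the only rule whose conclusion has that shape, inversion is immediate and yields, for every $\FUNC\ (x \ t_S) \ m M \ \{ \RETURN\ e \} \in \foreachN{D}$, the relation $\redLRConfk{k}{m M}{\FUNC\ (x \ t_S) \ mM \ \{ \RETURN\ e \}}{\mT{m}{t_S}}$. Note that $\vbMethTL$ itself does not occur in the premise of \Rule{red-rel-decls} — only the variables $\mT{m}{t_S}$ do — so it can be carried through unchanged.

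Then I would fix an arbitrary such declaration; since $k' \leq k$, Lemma~\ref{le:monotonicity2} gives $\redLRConfk{k'}{m M}{\FUNC\ (x \ t_S) \ mM \ \{ \RETURN\ e \}}{\mT{m}{t_S}}$. As the declaration was arbitrary, this holds for all $\FUNC$ declarations in $\foreachN{D}$. Finally I would re-apply rule \Rule{red-rel-decls} with these weakened premises to conclude $\redLRk{k'}{}{}{\foreachN{D}}{\vbMethTL}$, which is exactly the statement. I do not anticipate any real obstacle: the only thing to be careful about is that the inversion of \Rule{red-rel-decls} is applied to \emph{every} method declaration in $\foreachN{D}$ and that Lemma~\ref{le:monotonicity2} is invoked uniformly, so that the re-application of \Rule{red-rel-decls} is justified.
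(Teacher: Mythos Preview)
Your proposal is correct and matches the paper's own argument, which simply says the result follows via Lemma~\ref{le:monotonicity2}; you have just spelled out the inversion of \Rule{red-rel-decls} and its re-application explicitly. One small inaccuracy: $\vbMethTL$ does occur in the premise of \Rule{red-rel-decls} (it is part of the configuration superscript $\pair{\foreachN{D}}{\vbMethTL}$ in $\redLRConfk{k}{mM}{\ldots}{\ldots}$), but as you note it is carried through unchanged, so this does not affect the argument.
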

\begin{proof}
  Follows via Lemma~\ref{le:monotonicity2}.
  \qed
\end{proof}

\subsection{Lemma~\ref{le:upcast-red-equiv}}

\begin{proof}

  We show that $\redLRConfk{k}{u}{e}{\expT_1 \ \expT_2}$ by
  making use of the following auxiliary statement.

  Let $\tdUpcast{\foreachN{D}}{\subtypeOf{t}{u}}{\expT}$
  and $\redLRk{k}{}{}{\foreachN{D}}{\vbMethTL}$
  and $\redLRConfk{k}{t}{v}{\uT}$.
  Then, we find that $\redLRConfk{k}{u}{v}{\expT \ \uT}$.

  Suppose $k_1 < k$ and $k_2 < k$ and $k - k_1 - k_2 > 0$
  and (1) $\reduceFGk{k_1}{\foreachN{D}}{e}{v}$ for some $v$
  and (2) $\reduceTLk{k_2}{\vbMethTL}{\expT_2}{\uT}$ for some $\uT$.
  Based on the assumption that $\redLRConfk{k}{t}{e}{\expT_2}$
  and via rule \Rule{red-rel-exp} we conclude that
  $\redLRConfk{k - k_1 - k_2}{t}{v}{\uT}$.

  Via the auxiliary statement we conclude that
  (3) $\redLRConfk{k - k_1 - k_2}{u}{v}{\expT_1 \ \uT}$.

  Via rule \Rule{red-rel-exp} making use of (1), (2) and (3)
  we conclude that $\redLRConfk{k}{u}{e}{\expT_1 \ \expT_2}$ and we are done.

Proof of auxiliary statement.

We have to show that for all $k_2 < k$
where $k - k_2 > 0$ and $\reduceTLk{k_2}{\vbMethTL}{\expT \ \uT}{\uT'}$
we have that $\redLRConfk{k-k_2}{t}{v}{\uT'}$.

We perform a case analysis of the derivation for $\tdUpcast{\foreachN{D}}{\subtypeOf{t}{u}}{\expT}$
and label the assumptions
(1) $\redLRk{k}{}{}{\foreachN{D}}{\vbMethTL}$
and (2) $\redLRConfk{k}{t}{v}{\uT}$ for later reference.

  \noindent
      {\bf Case} \Rule{td-cons-struct-iface}:
\begin{mathpar}
\inferrule
          {\TYPE\ t_I \ \INTERFACE\ \{ \foreachN{S} \} \in \foreachN{D}
        \\ \methodSpecifications{\foreachN{D}}{t_S} \supseteq \foreachN{S}
        \\ \foreachN{S} = \foreach{m M}{n}
          }
          {\tdUpcast{\foreachN{D}}{\subtypeOf{t_S}{t_I}}
                    {\lambda \xT.
                      \kT_{t_I} \ (\xT, \foreach{\mT{m_i}{t_S}}{n}) }
          }
\end{mathpar}

Set $\expT = \lambda \xT. \kT_{t_I} \ (\xT, \foreach{\mT{m_i}{t_S}}{n})$.
Then, (3) $\reduceTLk{1}{\vbMethTL}{\expT \ \uT}{\uT'}$
where $\uT' = \kT_{t_I} \ (\uT, \foreach{\mT{m_i}{t_S}}{n})$.

From (1) and Lemma~\ref{le:monotonicity3} we obtain
(4) $\forall k_1 < k. \redLRk{k_1}{}{}{\foreachN{D}}{\vbMethTL}$.

From (2) and Lemma~\ref{le:monotonicity} we obtain
(5) $\forall k_2 < k. \redLRConfk{k_2}{t_S}{v}{\uT}$ (for this case $t = t_S$).

From (4), (5) and via rule \Rule{red-rel-iface} we obtain
(6) $\redLRConfk{k}{t_I}{v}{\uT'}$.

From (3), (6) and via rule \Rule{red-rel-exp} we obtain
$\redLRConfk{k}{t_I}{v}{\expT \ \uT}$ and we are done for this case.

\noindent
{\bf Case} \Rule{td-cons-iface-iface}:
\begin{mathpar}
    \inferrule
          {\TYPE\ t_I \ \INTERFACE\ \{ \foreach{R_i}{n} \} \in \foreachN{D}
       \\ \TYPE\ u_I \ \INTERFACE\ \{ \foreach{S_i}{q} \} \in \foreachN{D}
       \\  S_i = R_{\mapPerm(i)} \quad\noteForall{i \in [q]}
        }
          {\tdUpcast{\foreachN{D}}{\subtypeOf{t_I}{u_I}}
                    { \lambda \xT. \CASE\,\xT\,\OF\
                          \kT_{t_I} \ (\xTval, \foreach{\xT_i}{n})
                          \rightarrow
                         \kT_{u_I} \ (\xTval, \foreach{\xT_{\mapPerm(i)}}{q})
                    }
         }
\end{mathpar}

From (2) and for this case we can conclude via rule \Rule{red-rel-iface}
that (3) $\uT = \kT_{t_I} \ (\uT', \foreach{\uT_i}{n})$
for some $u_S$, $\uT'$ and $\foreach{\uT_i}{n}$
where
$\methodSpecifications{\foreachN{D}}{t_I} = \{ \foreach{m_i M_i}{n} \}$
and
(4) $\forall k_1 < k . \redLRConfk{k_1}{u_S}{v}{\uT'}$
and
(5) $\forall k_2 < k, i \in [n] . \redLRConfk{k_2}{m_i M_i}{\methodLookup{\foreachN{D}}{(m_i,u_S)}}{\uT_i}$.

Set $\expT = \lambda \xT. \CASE\,\xT\,\OF\
                          \kT_{t_I} \ (\xTval, \foreach{\xT_i}{n})
                          \rightarrow
                          \kT_{u_I} \ (\xTval, \foreach{\xT_{\mapPerm(i)}}{q})$.
Then, (6) $\reduceTLk{1}{\vbMethTL}{\expT \ \uT}{\uT''}$
where (7) $\uT'' = \kT_{u_I} \ (\uT', \foreach{\uT_{\mapPerm(i)}}{q})$.

From (4), (5) and (7) via rule \Rule{red-rel-iface} we obtain that
(8) $\redLRConfk{k}{u_I}{v}{\uT''}$.

From (6), (8) and via rule \Rule{red-rel-exp} we obtain
$\redLRConfk{k}{u_I}{v}{\expT \ \uT}$ and we are done.
\qed
\end{proof}

\subsection{Lemma~\ref{le:downcast-red-equiv}}

We first introduce an auxiliary statement.

\begin{lemma}
\label{le:k-steps}
  Let $\redLRConfk{k}{t}{e'}{\expT'}$
  and $\reduceFGk{k_1}{\foreachN{D}}{e}{e'}$
  and $\reduceTLk{k_2}{\vbMethTL}{\expT}{\expT'}$.
  Then, we find that $\redLRConfk{k + k_1 + k_2}{t}{e}{\expT}$
\end{lemma}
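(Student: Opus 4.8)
The statement is proved directly from the defining rule \Rule{red-rel-exp}. To establish $\redLRConfk{k + k_1 + k_2}{t}{e}{\expT}$ it suffices to take arbitrary $j_1, j_2 < k + k_1 + k_2$ and values $v, \uT$ such that $(k+k_1+k_2) - j_1 - j_2 > 0$, $\reduceFGk{j_1}{\foreachN{D}}{e}{v}$ and $\reduceTLk{j_2}{\vbMethTL}{\expT}{\uT}$, and then to derive $\redLRConfk{(k+k_1+k_2)-j_1-j_2}{t}{v}{\uT}$.

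First I would peel the given prefixes off these reductions. Since FG reduction is deterministic and $v$ is an FG value, hence irreducible, the hypothesised reduction $\reduceFGk{k_1}{\foreachN{D}}{e}{e'}$ forms an initial segment of $\reduceFGk{j_1}{\foreachN{D}}{e}{v}$; therefore $j_1 \geq k_1$ and $\reduceFGk{j_1 - k_1}{\foreachN{D}}{e'}{v}$. The identical argument on the target side gives $j_2 \geq k_2$ and $\reduceTLk{j_2 - k_2}{\vbMethTL}{\expT'}{\uT}$. Writing $j_1' = j_1 - k_1$ and $j_2' = j_2 - k_2$, the condition $(k+k_1+k_2) - j_1 - j_2 > 0$ rewrites to $k - j_1' - j_2' > 0$, which in turn forces $j_1' < k$ and $j_2' < k$; hence the side conditions of \Rule{red-rel-exp} at step index $k$ are all met.

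Now I would appeal to the hypothesis $\redLRConfk{k}{t}{e'}{\expT'}$, instantiating its premise with $j_1', j_2', v, \uT$: this yields $\redLRConfk{k - j_1' - j_2'}{t}{v}{\uT}$. Since $k - j_1' - j_2' = (k+k_1+k_2) - j_1 - j_2$, this is exactly the required conclusion and we are done. If one prefers to read the step-count superscripts only as upper bounds rather than as exact lengths, the same argument still applies after peeling off the actual prefixes that lead to $e'$ and $\expT'$ and then invoking Monotonicity, Lemma~\ref{le:monotonicity}, to bring the step index down to $(k+k_1+k_2) - j_1 - j_2$.

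\textbf{Main obstacle.} There is no real difficulty here: the lemma is a bookkeeping device used in the proof of Lemma~\ref{le:downcast-red-equiv}. The only point needing care is the appeal to determinism of the FG and TL operational semantics, which is what guarantees that the intermediate expressions $e'$ and $\expT'$ genuinely lie on the reduction paths from $e$ to $v$ and from $\expT$ to $\uT$; this is routine to check from the rules in Figures~\ref{f:fg} and~\ref{f:target-lang} (using \FGUniqueFields{} and \FGUniqueReceiver{} for field selection and method calls in FG). Everything else is the elementary step-index arithmetic carried out above.
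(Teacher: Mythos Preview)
Your argument is correct and matches the paper's approach: both proofs rest on determinism of the FG and TL reductions to align the given prefixes with the full reductions to values, and then perform the same step-index arithmetic via rule \Rule{red-rel-exp}. The paper presents it in the forward direction (extract $v,\uT$ from the hypothesis and compose the reductions), whereas you unfold the goal first and peel off the prefixes, but the content is identical.
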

\begin{proof}
  If either $e'$ or $\expT'$ are irreducible the result follows immediately.

  Otherwise, based on rule \Rule{red-rel-exp} we find that
  $\reduceFGk{k_1'}{\foreachN{D}}{e'}{v}$
  and $\reduceTLk{k_2'}{\vbMethTL}{\expT'}{\uT}$
  and $\redLRConfk{k-k_1' - k_2'}{t}{v}{\uT}$.

  Based on the above, our assumptions and rule \Rule{red-rel-exp}
  we find that $\redLRConfk{k + k_1 + k_2}{t}{e}{\expT}$ and we are done.
  \qed
\end{proof}

Here comes the proof of Lemma~\ref{le:downcast-red-equiv}.

\begin{proof}
  We perform a case analysis of the derivation $\tdDowncast{\foreachN{D}}{\assertOf{t}{u}}{\expT_1}$
  and label the assumptions
(1) $\redLRk{k}{}{}{\foreachN{D}}{\vbMethTL}$
and (2) $\redLRConfk{k}{t}{e}{\expT_2}$ for later reference.

  \noindent
  {\bf Case} \Rule{td-destr-iface-struct}:
  \begin{mathpar}
  \inferrule
            {\TYPE\ t_I \ \INTERFACE\ \{ \foreach{S}{n} \} \in \foreachN{D}
             \\ \fgSub{\foreachN{D}}{t_S}{t_I}
            }
            {\tdDowncast{\foreachN{D}}{\assertOf{t_I}{t_S}}{\lambda \xT.  \CASE\,\xT\,\OF\
                    \kT_{t_I} \ (K_{t_S} \ \foreachN{\yT}, \foreach{\xT}{n})
                      \rightarrow K_{t_S} \ \foreachN{\yT}
                   }
            }
  \end{mathpar}

  We set $\expT_1  = \lambda \xT.  \CASE\,\xT\,\OF\
                    \kT_{t_I} \ (K_{t_S} \ \foreachN{\yT}, \foreach{\xT}{n})
                      \rightarrow K_{t_S} \ \foreachN{\yT}$.

  From (2) and via rule \Rule{red-rel-exp} we conclude that
  forall  $k_1 < k$, $k_2 < k$, $v$, $\uT$ where
  (3) $k  - k_1 - k_2 > 0$
  and
  (4) $\reduceFGk{k_1}{\foreachN{D}}{e}{v}$
  and
  (5) $\reduceTLk{k_2}{\vbMethTL}{\expT_2}{\uT}$
  we have that
  (6) $\redLRConfk{k- k_1 - k_2}{t_I}{v}{\uT}$.

  From (6) and rule \Rule{red-rel-iface} we conclude that
  (7) $\uT = \kT_{t_I} \ (\RepK{u_S}, \uT, \foreach{\uT_i}{n})$
  where $\uT = \kT_{u_S} \ \foreachN{\uT'}$ and for all (8) $k_1' < k- k_1 - k_2$ we have that
  (9) $\redLRConfk{k_1'}{u_s}{v}{\uT}$.

  \noindent
  {\bf Subcase $t_S \not = u_S$:}
  Neither $e.(t_S)$ nor $\expT_1 \ \expT_2$ are reducible and
  therefore we immediately can conclude that
  $\redLRConfk{k}{t_S}{e}{\expT_1 \ \expT_2}$ holds.

    \noindent
   {\bf Subcase $t_S = u_S$:}
   From (4) we conclude that
   (10) $\reduceFGk{k_1 + 1}{\foreachN{D}}{e.(t_S)}{v}$.

   From (5) and (7) we conclude that
   (11) $\reduceTLk{k_2 + 3}{\vbMethTL}{\expT_1 \ \expT_2}{\uT}$.
   There are three additional reduction steps
   as we have one extra lambda and two extra pattern match applications.

   From (8) and (9) and the Monotonicity Lemma~\ref{le:monotonicity} we conclude that
   $\redLRConfk{k- (k_1+1) - (k_2 + 3)}{t_s}{v}{\uT}$ and via rule \Rule{red-rel-exp}
   we obtain that $\redLRConfk{k}{t_S}{e}{\expT_1 \ \expT_2}$ and we are done for this case.

   \noindent
   {\bf Case} \Rule{td-destr-iface-iface}:
   \begin{mathpar}
    \inferrule
          { \TYPE\ t_I \ \INTERFACE\ \{ \foreach{S}{n} \} \in \foreachN{D}
            \\
            \foreachN{\clsT} =
            [ \kT_{t_S} \ \foreachN{\yT} \rightarrow (\expT \ \xTval)
               \mid \tdUpcast{\foreachN{D}}{\subtypeOf{t_S}{u_I}}{\expT} ]
          }
          { \tdDowncast{\foreachN{D}}{\assertOf{t_I}{u_I}}{\lambda \xT.
                             \CASE\ \ \xT \ \OF\
                               \kT_{t_I} \ (\xTval, \foreach{\xT}{n}) \rightarrow \CASE\ \xTrep \ \OF\ \foreachN{\clsT}}
          }
   \end{mathpar}

   We set $\expT_1 = \lambda \xT.
                             \CASE\ \ \xT \ \OF\
                             \kT_{t_I} \ (\xTval, \foreach{\xT}{n}) \rightarrow \CASE\ \xTrep \ \OF\ \foreachN{\clsT}$.

   We apply similar reasoning as in case of \Rule{td-destr-iface-struct}.

    From (2) and via rule \Rule{red-rel-exp} we conclude that
  forall  $k_1 < k$, $k_2 < k$, $v$, $\uT$ where
  (3) $k  - k_1 - k_2 > 0$
  and
  (4) $\reduceFGk{k_1}{\foreachN{D}}{e}{v}$
  and
  (5) $\reduceTLk{k_2}{\vbMethTL}{\expT_2}{\uT}$
  we have that
  (6) $\redLRConfk{k- k_1 - k_2}{t_I}{v}{\uT}$.

  From (6) and rule \Rule{red-rel-iface} we conclude that
  (7) $\uT = \kT_{t_I} \ (\uT, \foreach{\uT_i}{n})$
  where $\uT = \kT_{t_S} \ \foreachN{\uT'}$ and for all (8) $k_1' < k- k_1 - k_2$ we have that
  (9) $\redLRConfk{k_1'}{t_s}{v}{\uT}$.
  We use here $t_S$ (instead of $u_S$) to match the naming conventions
  in the premise of rule \Rule{td-destr-iface-iface}.

    \noindent
  {\bf Subcase $\fgSub{\foreachN{D}}{t_S}{u_I}$ does not hold:}
    Neither $e.(t_S)$ nor $\expT_1 \ \expT_2$ are reducible and
  therefore we immediately can conclude that
  $\redLRConfk{k}{t_S}{e}{\expT_1 \ \expT_2}$ holds.

    \noindent
  {\bf Subcase $\fgSub{\foreachN{D}}{t_S}{u_I}$ does hold:}
      From (4) we conclude that
   (10) $\reduceFGk{k_1}{\foreachN{D}}{e.(u_I)}{v}$.

      From (5) and (7) we conclude that
      (11) $\reduceTLk{k_2 + 3}{\vbMethTL}{\expT_1 \ \expT_2}{\expT \ \uT}$
      where (12) $\tdUpcast{\foreachN{D}}{\subtypeOf{t_S}{u_I}}{\expT}$.
      There are three additional reduction steps
      as we have one extra lambda and two extra pattern match applications.
      The upcast $\expT$ has not been applied.

      From (6) and (12) and Lemma~\ref{le:upcast-red-equiv} we obtain that
      (13) $\redLRConfk{k_1'}{u_I}{v}{\expT \ \uT}$.

      Via the Monotonicity Lemma~\ref{le:monotonicity} and Lemma~\ref{le:k-steps} we obtain that
      $\redLRConfk{k}{u_I}{e.(u_I)}{\expT_1 \ \expT_2}$ and we are done.
      \qed
\end{proof}

\subsection{Lemma~\ref{le:exp-red-equivalent}}

\begin{proof}

  By induction over the derivation $\tdExpTrans{\pair{\foreachN{D}}{\fgEnv}}{e : t}{\expT}$.
  We label the assumptions   (1) $\redLRk{k}{\triple{\foreachN{D}}{\vbMethTL}{\fgEnv}}{}{\vbFG}{\vbTL}$
  and
  (2) $\redLRk{k}{}{}{\foreachN{D}}{\vbMethTL}$ as well as the to be proven
  statement (3) $\redLRConfk{k}{t}{\vbFG(e)}{\vbTL(\expT)}$ for some later reference.

  \noindent
  {\bf Case} \Rule{td-var}:
  \begin{mathpar}
    \inferrule
            { (x : t) \in \fgEnv
            }
            { \tdExpTrans{\pair{\foreachN{D}}{\fgEnv}}{x : t}{\xT}
            }
  \end{mathpar}

  (3) follows immediately from (1).

  \noindent
      {\bf Case} \Rule{td-struct}:
  \begin{mathpar}
    \inferrule
        {   \TYPE\ t_S \ \STRUCT\ \{ \foreachN{f_i \ t_i}{n} \} \in  \foreachN{D}
               \\   \tdExpTrans{\pair{\foreachN{D}}{\fgEnv}}{e_i : t_i}{\expT_i}\quad\noteForall{i \in [n]}
            }
            { \tdExpTrans{\pair{\foreachN{D}}{\fgEnv}}{t_S \{ \foreach{e_i}{n} \} : t_S }{\kT_{t_S} \ (\foreach{\expT_i}{n})}
            }
  \end{mathpar}

  Suppose there exists $k_1 < k$ and $k_2 < k$ and $v$ and $\uT$
  such that
  $(k  - k_1 - k_2 > 0$
  and
  (4) $\reduceFGk{k_1}{\foreachN{D}}{\vbFG(t_s \{ \foreach{e_i}{n} \})}{t_s \{ \foreach{v_i}{n} \}}$
  and
  (5) $\reduceTLk{k_2}{\vbMethTL}{\vbTL(\kT_{t_S} \ (\foreach{\expT_i}{n}))}{\kT_{t_S} \ (\foreach{\uT_i}{n})}$
  for $i \in [n]$.

  From (4) and (5) we conclude that
  (6) $\reduceFGk{k_1'}{\foreachN{D}}{\vbFG(e_i)}{v_i}$
  and
  (7) $\reduceTLk{k_2'}{\vbMethTL}{\vbTL(\expT_i)}{\uT_i}$
  for $i \in [n]$ where we pick $k_1'$ and $k_2$' such that $k_1' < k_1$ and $k_2' < k_2$
  and all the subreductions yield some value.

  By induction (8) $\redLRConfk{k}{t_i}{\vbFG(e_i)}{\vbTL(\expT_i)}$ for $i \in [n]$.

  From (6), (7), (8) and via rule \Rule{red-rel-exp} we conclude that
  (9) $\redLRConfk{k-k_1'-k_2'}{t_i}{v_i}{\uT_i}$ for $i \in [n]$.

  From (9) and rule \Rule{red-rel-struct} we conclude that
  (10) $\redLRConfk{k-k_1'-k_2'}{t_S}{t_S \{ \foreach{v_i}{n} \}}{\kT_{t_S} \ (\foreach{\uT_i}{n})}$.

  From (10) and Lemma~\ref{le:monotonicity} we conclude that
  (11) $\redLRConfk{k-k_1-k_2}{t_S}{t_S \{ \foreach{v_i}{n} \}}{\kT_{t_S} \ (\foreach{\uT_i}{n})}$.

  From (4), (5), (11) and via rule \Rule{red-rel-exp} we conclude that
  $\redLRConfk{k}{t_S}{\vbFG(t_s \{ \foreach{e_i}{n} \})}{\vbTL(\kT_{t_S} \ (\foreach{E_i}{n}))}$
  and we are done for this case.

  \noindent
  {\bf Case} \Rule{td-access}:
  \begin{mathpar}
      \inferrule
           {  \tdExpTrans{\pair{\foreachN{D}}{\fgEnv}}{e : t_S}{\expT}
             \\  \TYPE\ t_S \ \STRUCT\ \{ \foreach{f_j \ t_j}{n} \} \in  \foreachN{D}
          }
           { \tdExpTrans{\pair{\foreachN{D}}{\fgEnv}}
                        {e.f_i : t_i }
                        { \CASE\ \ \expT\ \ \OF\ \kT_{t_S} \ (\foreach{\xT_j}{n}) \rightarrow \xT_i}
           }
  \end{mathpar}

  Similar reasoning as in case of \Rule{td-struct}.

    \noindent
  {\bf Case} \Rule{td-call-struct}:
  \begin{mathpar}
      \inferrule
            { m (\foreach{x_i \ t_i}{n}) \ t \in \methodSpecifications{\foreachN{D}}{t_S}
              \\
              \tdExpTrans{\pair{\foreachN{D}}{\fgEnv}}{e : t_S}{\expT}
              \\ \tdExpTrans{\pair{\foreachN{D}}{\fgEnv}}{e_i : t_i}{\expT_i}\quad\noteForall{i \in [n]}
          }
            { \tdExpTrans{\pair{\foreachN{D}}{\fgEnv}}{e.m(\foreach{e_i}{n}) : t}{\mT{m}{t_S} \ \expT \ (\foreach{\expT_i}{n}) } }
  \end{mathpar}

    Suppose there exists $k_1 < k$ and $k_2 < k$ and $v$ and $\uT$
  such that
  $(k  - k_1 - k_2 > 0$
  and
  (4) $\reduceFGk{k_1}{\foreachN{D}}{\vbFG(e.m(\foreach{e_i}{n}))}{v}$
  and
  (5) $\reduceTLk{k_2}{\vbMethTL}{\vbTL(\mT{m}{t_S} \ \expT \ (\foreach{\expT_i}{n}))}{\uT}$.

  From the assumptions and (4) we conclude that
  (4a) $\reduceFGk{1}{\foreachN{D}}{\vbFG(e.m(\foreach{e_i}{n}))}{\Angle{\subst{x}{\vbFG(e)},\foreach{\subst{x_i}{\vbFG(e_i)}}{n}} e'}$
  and
  (4b) $\reduceFGk{k_1 - 1}{\foreachN{D}}{\Angle{\subst{x}{\vbFG(e)},\foreach{\subst{x_i}{\vbFG(e_i)}}{n}} e'}{v}$
  where (4c) $\FUNC\ (x \ t_S) \ m (\foreach{x_i \ t_i}{n}) \ t \ \{ \RETURN\ e' \} \in \foreachN{D}$.

  From (4) we conclude that
  (6) $\reduceFGk{k_1'}{\foreachN{D}}{\vbFG(e)}{v'}$
  and
  (7) $\reduceFGk{k_1'}{\foreachN{D}}{\vbFG(e_i)}{v_i}$
  for some $v'$ and $v_i$ for $i \in [n]$ where $k_1' < k_1$.
  We pick again some large enough $k_1'$ such that all subreductions yields some value.

  Similarly, from (5) we conclude that
  (8) $\reduceTLk{k_2'}{\vbMethTL}{\vbTL(\expT)}{\uT'}$
  and
  (9) $\reduceTLk{k_2'}{\vbMethTL}{\vbTL(\expT_i)}{\uT_i}$
  for some $\uT'$ and $\uT_i$ for $i \in [n]$ where $k_2' < k_2$.

  By induction we have that
  (10) $\redLRConfk{k}{t_S}{\vbFG(e)}{\vbTL(\expT)}$
  and
  (11) $\redLRConfk{k}{t_i}{\vbFG(e_i)}{\vbTL(\expT_i)}$ for $i \in [n]$.

  From (6), (8), (10) and via rule \Rule{red-rel-exp} we conclude that
  (12) $\redLRConfk{k-k_1'-k_2'}{t_S}{v'}{\uT'}$.

  Similarly, from (7), (9), (11) and via rule \Rule{red-rel-exp} we conclude that
  (13) $\redLRConfk{k-k_1'-k_2'}{t_i}{v_i}{\uT_i}$ for $i \in [n]$.

  From (4c), (12), (13), (3) and via rule \Rule{red-rel-method} we conclude that
  (14) $\redLRConfk{k-k_1'-k_2'}{t}{\Angle{\subst{x}{v},\foreach{\subst{x_i}{v_i}}{n}} e'}{\mT{m}{t_S} \ \uT' \ (\foreach{\uT_i}{n})}$.

  Based on our choice of $k_1'$ and $k_2'$ we conclude that
  (15) $\reduceFGk{k_1 - k_1'}{\foreachN{D}}{\Angle{\subst{x}{v},\foreach{\subst{x_i}{v_i}}{n}} e'}{v}$
  and
  (16) $\reduceTLk{k_2 - k_2'}{\vbMethTL}{\mT{m}{t_S} \ \uT' \ (\foreach{\uT_i}{n})}{\uT}$.
  That is, with $k_1 - k_1'$ steps or less we reach $v$ because $k_1$ is the overall number of steps required
  and $k_1'$ is the maximum number of one of the subcomputation steps.
  The same applies to $k_2 - k_2'$.

  From (14), (15), (16) and via rule \Rule{red-rel-exp} we conclude that
  $\redLRConfk{k - k_1 - k_2}{t}{v}{\uT}$ where we make use of the fact that
  $k - k_1' - k_1' - (k_1 - k_1') - (k_2 - k_2') = k - k_1 - k_2$.
  Thus, we are done for this case.

  \noindent
  {\bf Case} \Rule{td-call-iface}:
  \begin{mathpar}
  \inferrule
            { \tdExpTrans{\pair{\foreachN{D}}{\fgEnv}}{e : t_I}{\expT}
             \\ \TYPE\ t_I \ \INTERFACE\ \{ \foreach{S_i}{q} \} \in \foreachN{D}
             \\ S_j = m(\foreach{x_i \ t_i}{n}) \ t\quad(\textrm{for some}~j \in [q])
            \\ \tdExpTrans{\pair{\foreachN{D}}{\fgEnv}}{e_i : t_i}{\expT_i}\quad\noteForall{i \in [n]}
          }
          { \tdExpTrans{\pair{\foreachN{D}}{\fgEnv}}
              {e.m(\foreach{e_i}{n}) : t}
              {\CASE\ \ \expT \ \OF\ \kT_{t_I} \ (\xTval, \foreach{\xT_i}{q}) \rightarrow \xT_j \ \xTval \ (\foreach{\expT_i}{n}) }
          }
  \end{mathpar}

  Similar reasoning as in case of \Rule{td-call-struct}.
  We set
  $$\expT' = \CASE\ \ \expT \ \OF\ \kT_{t_I} \ (\_, \xTval, \foreach{\xT_i}{q}) \rightarrow \xT_j \ \xTval \ (\foreach{\expT_i}{n}).$$

  Suppose there exists $k_1 < k$ and $k_2 < k$ and $v$ and $\uT$
  such that
  $(k  - k_1 - k_2 > 0$
  and
  (4) $\reduceFGk{k_1}{\foreachN{D}}{\vbFG(e.m(\foreach{e_i}{n}))}{v}$
  and
  (5) $\reduceTLk{k_2}{\vbMethTL}{\vbTL(\expT')}{\uT}$.

  From the assumptions and (4) we conclude that
  (4a) $\reduceFGk{1}{\foreachN{D}}{\vbFG(e.m(\foreach{e_i}{n}))}{\Angle{\subst{x}{\vbFG(e)},\foreach{\subst{x_i}{\vbFG(e_i)}}{n}} e'}$
  and
  (4b) $\reduceFGk{k_1 - 1}{\foreachN{D}}{\Angle{\subst{x}{\vbFG(e)},\foreach{\subst{x_i}{\vbFG(e_i)}}{n}} e'}{v}$
  where (4c) $\FUNC\ (x \ t_S) \ m (\foreach{x_i \ t_i}{n}) \ t \ \{ \RETURN\ e' \} \in \foreachN{D}$.

  From (4) we conclude that
  (6) $\reduceFGk{k_1'}{\foreachN{D}}{\vbFG(e)}{v'}$
  and
  (7) $\reduceFGk{k_1'}{\foreachN{D}}{\vbFG(e_i)}{v_i}$
  for some $v'$ and $v_i$ for $i \in [n]$ where $k_1' < k_1$.
  We pick again some large enough $k_1'$ such that all subreductions yields some value.

  Similarly, from (5) we conclude that
  (8) $\reduceTLk{k_2'}{\vbMethTL}{\vbTL(\expT)}{\uT'}$
  and
  (9) $\reduceTLk{k_2'}{\vbMethTL}{\vbTL(\expT_i)}{\uT_i}$
  for some $\uT'$ and $\uT_i$ for $i \in [n]$ where $k_2' < k_2$.

  By induction we have that
  (10) $\redLRConfk{k}{t_S}{\vbFG(e)}{\vbTL(\expT)}$
  and
  (11) $\redLRConfk{k}{t_i}{\vbFG(e_i)}{\vbTL(\expT_i)}$ for $i \in [n]$.

  From (6), (8), (10) and via rule \Rule{red-rel-exp} we conclude that
  (12) $\redLRConfk{k-k_1'-k_2'}{t_I}{v'}{\uT'}$.

  Similarly, from (7), (9), (11) and via rule \Rule{red-rel-exp} we conclude that
  (13) $\redLRConfk{k-k_1'-k_2'}{t_i}{v_i}{\uT_i}$ for $i \in [n]$.

  From (12) and via \Rule{red-rel-iface} we conclude that
  (13) $\uT' = \kT_{t_I} \ (\uT'', \foreach{\uT'_i}{p})$
  and
  $\uT'' = \kT_{t_S} \ \foreachN{\uT'''}$
  and
  (14) $\redLRConfk{k''}{t_S}{v'}{\uT''}$
  and
  (15) $\redLRConfk{k''}{m(\foreach{x_i \ t_i}{n})}{\FUNC\ (x \ t_S) \ m (\foreach{x_i \ t_i}{n}) \ t \ \{ \RETURN\ e' \}}{\uT'_j}$ for some $k''$
  where $k'' < k - k_1' - k_2'$ and $j \in [q]$ is the same $j$ as in the premise of rule \Rule{td-call-iface}.

  From (15) via rule \Rule{red-rel-method} and (14) and (13) plus the Monotonicity Lemma~\ref{le:monotonicity}
  we conclude that
  (16) $\redLRConfk{k''}{t}{\Angle{\subst{x}{v'},\foreach{\subst{x_i}{v_i}}{n}} e'}{\uT'_j \ \uT'' \ (\foreach{\uT_i}{n})}$.

  For concreteness, we can assume $k'' = k - k_1' - k_2' - 1$.
  Based on our choice of $k_1'$ and $k_2'$ we conclude that
  (17) $\reduceFGk{k_1 - k_1'}{\foreachN{D}}{\Angle{\subst{x}{v},\foreach{\subst{x_i}{v_i}}{n}} e'}{v}$
  and
  (18) $\reduceTLk{k_2 - k_2' + 1}{\uT'_j \ \uT'' \ (\foreach{\uT_i}{n})}{\uT}$.
  The argument is the same as in case of \Rule{td-call-struct}.

  From (16), (17), (18) and via rule \Rule{red-rel-exp} we conclude that
  $\redLRConfk{k - k_1 - k_2}{t}{v}{\uT}$ and we are done for this case.

  \noindent
  {\bf Case} \Rule{td-sub}:
  \begin{mathpar}
  \inferrule
            {\tdExpTrans{\pair{\foreachN{D}}{\fgEnv}}{e : t}{\expT_2}
          \\ \tdUpcast{\foreachN{D}}{\subtypeOf{t}{u}}{\expT_1}
          }
            { \tdExpTrans{\pair{\foreachN{D}}{\fgEnv}}{e : u}{\expT_1 \ \expT_2} }
  \end{mathpar}

  By induction we obtain that
  (4) $\redLRConfk{k}{t}{\vbFG(e)}{\vbTL(\expT_2)}$.
  From (3), (4) and Lemma~\ref{le:upcast-red-equiv}  we obtain that
  $\redLRConfk{k}{u}{\vbFG(e)}{\expT_1 \ \vbTL(\expT_2)}$.

  We have that $\vbTL(\expT_1) = \expT_1$ and thus we are done for this case.

  \noindent
  {\bf Case} \Rule{td-assert}:
  \begin{mathpar}
  \inferrule
            {\tdExpTrans{\pair{\foreachN{D}}{\fgEnv}}{e : u}{\expT_2}
          \\ \tdDowncast{\foreachN{D}}{\assertOf{u}{t}}{\expT_1}
          }
          { \tdExpTrans{\pair{\foreachN{D}}{\fgEnv}}{e.(t) : t}{\expT_1 \ \expT_2} }
  \end{mathpar}

    By induction we obtain that
  (4) $\redLRConfk{k}{u}{\vbFG(e)}{\vbTL(\expT_2)}$.
  From (3), (4) and Lemma~\ref{le:downcast-red-equiv}  we obtain that
  $\redLRConfk{k}{t}{\vbFG(e).(t)}{\expT_1 \ \vbTL(\expT_2)}$.

  We have that $\vbTL(\expT_1) = \expT_1$ and thus we are done for this case.
  \qed
\end{proof}

\subsection{Lemma~\ref{le:method-red-rel-equiv}}

  \begin{proof}
    Based on rules \Rule{red-rel-decls} and \Rule{red-rel-method},
    for
    $$\FUNC\ (x \ t_S) \ m (\foreach{x_i \ t_i}{n}) \ t \ \{ \RETURN\ e \} \in \foreachN{D}$$
    we have to show that

   \begin{mathpar}
     \forall k' \leq k,
                v', \uT', \foreach{v_i}{n}, \foreach{\uT_i}{n}.
                    (\redLRConfk{k'}{t_S}{v'}{\uT'}
                    \wedge (\forall i \in [n]. \redLRConfk{k'}{t_i}{v_i}{\uT_i}))
         \\ \implies (1) \ \redLRConfk{k'}
               {t}
               {\Angle{\subst{x}{v'},\foreach{\subst{x_i}{v_i}}{n}} e}
               {(\mT{x}{t_S} \ \uT') \ (\foreach{\uT_i}{n})}
    \end{mathpar}

   We verify the result by induction on $k$.

   \noindent
       {\bf Case $k=1$:} We must perform several reductions on $(\mT{x}{t_S} \ \uT') \ (\foreach{\uT_i}{n})$
       to obtain a value. Due to $k=1$ the premise of rule \Rule{red-rel-exp} holds vacuously.
       Therefore, we can immediately establish (1).

 \noindent
     {\bf Case $k \implies k+1$:}
     Suppose $k' \leq k+1$
     and (2) $\redLRConfk{k'}{t_S}{v'}{\uT'}$
     and (3) $\redLRConfk{k'}{t_i}{v_i}{\uT_i}$
     for some $v'$, $\uT'$, $v_i$, $\uT_i$ for $i \in [n]$.

     Suppose $\Angle{\subst{x}{v'},\foreach{\subst{x_i}{v_i}}{n}} e$
     and $(\mT{x}{t_S} \ \uT') \ (\foreach{\uT_i}{n})$ are reducible.
     Otherwise, the result holds immediately.

     We have to show that for
     (4) $\reduceFGk{k_1}{\foreachN{D}}{\Angle{\subst{x}{v'},\foreach{\subst{x_i}{v_i}}{n}} e}{v''}$
     and
     (5) $\reduceTLk{k_2}{\vbMethTL}{(\mT{x}{t_S} \ \uT') \ (\foreach{\uT_i}{n})}{\uT''}$
     and $k+1-k_1 - k_2 > 0$
     we have that
     (6) $\redLRConfk{k + 1 - k_1 - k_2}{t}{v''}{\uT''}$.

     From (5) we can conclude that
     (7) $\vbMethTL \turnsTL (\mT{x}{t_S} \ \uT') \ (\foreach{\uT_i}{n})
     \reduceSym^{1}
     \lambda X . \lambda (\foreach{X_i }{n}). \expT
     \reduceSym^{2} \Angle{\subst{\xT}{\uT'},\foreach{\subst{\xT_i}{\uT_i}}{n}} \expT
     \reduceSym^{k_2'} \uT''$ where (8) $k_2 = k_2' + 3$.

     By induction we have that (9) $\redLRk{k}{}{}{\foreachN{D}}{\vbMethTL}$.

     From (2) and (3) and the Monotonicity Lemma~\ref{le:monotonicity}
     we find that
     (10) $\redLRConfk{k''}{t_S}{v'}{\uT'}$
     and (11) $\redLRConfk{k''}{t_i}{v_i}{\uT_i}$
     where $k'' \leq k$ for $i \in [n]$.

     By making use of (9), (10) and (11) we apply Lemma~\ref{le:exp-red-equivalent}
     on
     $$\tdMethTrans{\foreachN{D}}
                        {\FUNC\ (x \ t_S) \ m (\foreach{x_i \ t_i}{n}) \ t \ \{ \RETURN\ e \}}
                        {\lambda \xT . \lambda (\foreach{\xT_i }{n}). \expT}$$
     and thus obtain that
     (12) $\redLRConfk{k}{t}{\Angle{\subst{x}{v'},\foreach{\subst{x_i}{v_i}}{n}} e}{\Angle{\subst{\xT}{\uT'},\foreach{\subst{\xT_i}{\uT_i}}{n}} \expT}$.

     From (12), (4) and (7) via rule \Rule{red-rel-exp} we conclude that
     (11) $\redLRConfk{k-k_1-k_2'}{t}{v''}{\uT''}$.

     From (8) we conclude that
     (12) $k + 1 - k_1 - k_2 = k - k_1 - k_2' - 2$.

     From (11), (12) and the Monotonicity Lemma~\ref{le:monotonicity} we conclude that
     $\redLRConfk{k-k_1-k_2'-2}{t}{v''}{\uT''}$ and we are done.
     \qed
  \end{proof}

\end{document}